\documentclass[prx,aps,twocolumn,notitlepage,superscriptaddress,showpacs,nofootinbib]{revtex4-2}


\usepackage{enumerate,appendix}
\usepackage{qcircuit}
\usepackage{amsmath, amsthm, amssymb,commath}
\usepackage{color,calc,graphicx}
\usepackage[usenames,dvipsnames,svgnames,table,cmyk,hyperref]{xcolor}
\usepackage[colorlinks]{hyperref}
\usepackage{optidef}
\hypersetup{
	colorlinks = true,
	urlcolor = {blue},
	citecolor = {blue},
	linkcolor= {blue}
}

\usepackage{graphicx}
\usepackage{amsmath}
\usepackage{latexsym}
\usepackage{bbm}

\usepackage[charter,cal=cmcal,sfscaled=false]{mathdesign}
\usepackage{booktabs}
\usepackage{multirow} 
\usepackage{dcolumn}
\usepackage{mathrsfs}
\usepackage{csvsimple-l3}

\def \be {\begin{equation}}
\def \ee {\end{equation}}

\newcommand{\ket}[1]{|#1\rangle}
\newcommand{\bra}[1]{\langle#1|}

\def\>{\rangle}
\def\<{\langle}

\theoremstyle{remark}	

\newtheorem{definition}{Definition}
\newtheorem{theorem}{Theorem}
\newtheorem{lemma}[theorem]{Lemma}

\newtheorem{remark}[theorem]{Remark}

\newtheorem{example}{Example}

\newcommand{\ky}[1]{{#1}}

\newcommand{\ed}[1]{{#1}}



\usepackage{algorithm}
\usepackage[caption=false,labelformat=simple]{subfig}	

\newcommand{\sM}{{\cal M}}
\newcommand{\sN}{{\cal N}}

\usepackage{bm}
\newcommand{\vA}{\bm{A}}
\newcommand{\vB}{\bm{B}}

\newcommand{\vE}{\bm{E}}
\newcommand{\vF}{\bm{F}}

\newcommand{\vH}{\bm{H}}

\newcommand{\vs}{\bm{s}}

\newcommand{\vLb}{\bm{\Lambda}}

\usepackage{mathbbol}
\newcommand{\br}{\mathbb{r}}

\usepackage{caption,graphicx,newfloat} 
\DeclareFloatingEnvironment[
  fileext=lob,
  listname={List of Boxes},
  name=Box,
  placement=htp,
]{BOX}

\usepackage{caption}
\captionsetup{font=small}

\usepackage{makecell} 

\usepackage{nccmath} 


\usepackage{pdfpages}
\makeatletter
\AtBeginDocument{\let\LS@rot\@undefined}
\makeatother
\usepackage{pgffor}

\begin{document}

\title{Degenerate quantum erasure decoding}

\author{Kao-Yueh Kuo}
\email{kywukuo@saturn.yzu.edu.tw}
\affiliation{School of Mathematical and Physical Sciences, University of Sheffield, Sheffield, S3 7RH, United Kingdom}
\affiliation{Department of Electrical Engineering, Yuan Ze University, Taoyuan City 320315, Taiwan}
\author{Yingkai Ouyang}
\email{y.ouyang@sheffield.ac.uk}
\affiliation{School of Mathematical and Physical Sciences, University of Sheffield, Sheffield, S3 7RH, United Kingdom}

\begin{abstract}
Erasures are the primary type of errors in physical systems dominated by leakage errors. 
While quantum error correction (QEC) using stabilizer codes can combat erasure errors, it remains unknown which constructions achieve capacity performance.
If such codes exist, decoders with linear runtime in the code length are also desired.
In this paper, we present erasure capacity-achieving quantum codes under maximum-likelihood decoding (MLD), though MLD requires cubic runtime in the code length. 
For QEC, using an accurate decoder with the shortest possible runtime will minimize the degradation of quantum information while awaiting the decoder's decision.  
To address this, we propose belief propagation (BP) decoders that run in linear time and exploit error degeneracy in stabilizer codes, achieving capacity or near-capacity performance for a broad class of codes, including bicycle codes, product codes, and topological codes.
We furthermore explore the potential of our BP decoders to handle mixed erasure and depolarizing errors, and also local deletion errors via concatenation with permutation invariant codes.  
\end{abstract}

\maketitle

\section*{Introduction}

Erasure conversion \cite{Blu+24,SST+23,MLP+23,WKPT22,Con+22,  KHV+23,TWB+23,LHH+24, KCB23, LGZ+08} is a recent technique that converts leakage errors into erasures, building on earlier frameworks where the computational Hilbert space is embedded in a larger physical space \cite{GBP97,BDS97,DZ13}.
In systems dominated by leakage errors, such as ultracold Rydberg atoms \cite{Blu+24,SST+23,MLP+23,WKPT22,Con+22}, superconducting circuits \cite{KHV+23,TWB+23,LHH+24}, 
trapped ions \cite{KCB23}, 
or photonic systems \cite{LGZ+08},
erasure conversion ensures that the majority of errors are erasures, by paying necessary physical resource or time for the conversion. 
In the context of quantum error correction (QEC), the promised structure of errors affords potentially better error thresholds and decoding efficiency compared to unstructured errors.
We aim to explore optimal QEC performance in erasure-dominated systems.

The rate of a QEC code is the ratio of the number of encoded qubits $k$ to the number of physical qubits $n$.
The quantum erasure channel capacity, $C(p) = 1-2p$ \cite{BDS97,DZ13}, represents the maximum rate ($k/n$) for which a decoder can almost surely correct typical erasures, where $p$ is the independent erasure probability per qubit. This capacity quantifies the highest rate of reliable quantum information transmission over a noisy erasure channel.
A central question in quantum information theory is: how closely can practical QEC codes approach $C(p)$ with low-complexity decoders?

This question is particularly pertinent for quantum stabilizer codes with code parameters $[[n,k]]$ and low-weight commutative Pauli operators,
known as quantum low-density parity-check (QLDPC) codes. 
Two outstanding open problems in this area are: (1) identifying stabilizer code constructions that achieve capacity $r = 1 - 2p$ for arbitrary rates $r = k/n$, and (2) developing decoders that enable these constructions to achieve near-capacity performance with linear runtime complexity.
This paper solves both of these problems.

Previously, only two-dimensional (2D) topological codes, such as surface or toric codes, were known to achieve erasure capacity \cite{SBD09,Ohz12} with linear decoding time \cite{DZ20}, but their code rate vanishes ($r\to 0$) as $n$ increases. 

A recent breakthrough by IBM significantly reduce qubit overhead ($1-\frac{k}{n}$) compared to surface codes, using non-zero rate QLDPC codes---specifically bivariate bicycle codes---highlighting their non-vanishing rate and potential for nearly 2D layouts \cite{Bra+24}. 
Bicycle-type codes with hundreds of qubits achieve comparable performance to surface codes with thousands, protecting a similar number of information qubits, offering a 10~times reduction in the number of physical qubits.

A further significant reduction in qubit overhead can be achieved using higher-rate codes, with long code lengths ensuring the correction of a large fraction of errors, which we include in this work.
First, we show that the bicycle construction \cite{MMM04} achieves capacity for moderate-to-high rate codes on the quantum erasure channel.
Second, for low-rate (but non-zero rate) cases, product constructions with flexible rate, such as lifted-product (LP) codes \cite{PK22}, also achieve erasure capacity.
Third, to facilitate practical applications, we provide linear-time decoders universal for these codes, also supporting vanishing-rate 2D topological codes.

For erasures, the task of decoding (inferring the error) reduces to solving a system of linear equations (cf.~\eqref{eq:H4}--\eqref{eq:dec_era_eqs}). 
The optimal decoder is the maximum-likelihood decoder (MLD), which is often implemented using a Gaussian decoder. This approach employs Gaussian elimination with a complexity of $O(n^3)$, or $O(n^{2.373})$ \cite{Wil12,LeG14}.

In classical erasure error-correction, a binary bit-flipping belief propagation (Flip-BP$_2$) decoder is preferred over MLD for classical LDPC codes due to its linear runtime $O(n)$ and near-capacity performance \cite{SS96,LMSS01_era,Di+02,RU01_e,MP03,Tan+04,Sho06}. 
The success of BP is evident in 5G networks \cite{Pop+18,Zai+18,Aro+20}.
However, naively applying Flip-BP$_2$ to QLDPC codes is problematic due to inevitable small stopping sets in the quantum case 
(see \ed{Supplementary Note~3 for more details}). 
Small stopping sets 
obstruct BP convergence \cite{Ric03,Di+02,OUVZ02,RU01_e,MP03,Tan+04,Sho06} and are avoided in the classical code design \cite{RU01_e,MP03,Tan+04,Sho06}.

In the quantum case, various techniques have been proposed for decoding QLDPC codes under erasures.
First, 
    the peeling decoder was proposed to decode 2D topological codes, including peeling on toric or surface codes (in linear time) \cite{DZ20} or extended peeling on color codes (with additional complexity) \cite{AS19,SS23,LMS20}. 
    (The peeling decoder includes a spanning-tree step \cite{CLRS09}, followed by a peeling step resembling a Flip-BP$_2$ in \cite{LMSS01_era}, as pointed in \cite{CLLD24}.)
 Second, for product codes, the pruned peeling (improved over peeling decoder) and a Vertical-Horizontal (VH) post-processing were proposed \cite{CLLD24}. 
    The ``pruned peeling + VH'' has complexity $O(n^2)$, or $O(n^{1.5})$ for a probabilistic version. 
 Third, modifications of minimum-weight perfect matching (MWPM) were showed good erasure decoding accuracy
  for 2D codes \cite{KCB23,SJC+23}, though without linear time \cite{Edm65}. 
Recently, BP was improved with guided decimation \cite{GYP24}; and cluster decomposition (extending the VH technique) was proposed in \cite{YGP24} to balance accuracy and complexity between MLD and VH. Both works \cite{GYP24,YGP24} focused on product codes.

In this paper, we study a broad class of QLDPC codes, including 2D topological codes \cite{BK98,Kit97_03,DKLP02,BM07} and their variants such as XZZX codes \cite{KDP11,THD12,ATBFB21}, as well as bicycle codes \cite{MMM04}, hypergraph-product (HP) codes \cite{TZ09_14}, generalized HP (GHP) codes \cite{KP13}, and LP codes \cite{PK22}.
These are CSS-type codes \cite{CS96,Steane96}, except for the XZZX codes.
General QLDPC codes admit low-complexity quantum circuits and offer the benefits of (1) enabling error-corrected quantum computation with constant encoding rates \cite{Got14c,BE21} and (2) enhancing quantum network communication rates \cite{LK24}.

In our approach, we define \emph{erasure bounded-distance decoding} (eBDD) (Def.~\ref{def:eBDD}) as a metric for evaluating decoding accuracy.
We estimate the error threshold $p_\text{th}$ of a code family by checking whether a constant correction radius $t/n$ exists as $n$ increases, where $t$ is the expected number of corrected erasures. If so, we set $p_\text{th}=t/n$.
For clarity, we may denote $p^\text{(BP)}_\text{th}$ for BP and $p^\text{(MLD)}_\text{th}$ for MLD if they differ. 
We emphasize that the threshold values in this paper apply to independent erasures; other error types, such as depolarizing errors, yield different thresholds.

Quantum decoding relies on exploiting degeneracy, enabling a decoder to converge to any (degenerate) error equivalent to the occurred error up to a stabilizer, resulting in remarkable accuracy \cite{PC08,PK21,KL22}.
Our constructed codes achieve capacity performance with $r = 1 - 2p^\text{(MLD)}_\text{th}$ using the Gaussian decoder (MLD), which always finds a solution likely to be a degenerate error, at the cost of $O(n^3)$ complexity.
To achieve linear runtime $O(n)$, we first propose a gradient-descent Flip-BP$_2$ (GD Flip-BP$_2$) to address stopping sets via a simple GD scheme 
(which is supported by the symmetry of degenerate errors; see \cite{PC08} or \ed{Supplementary Note~4 for more discussions}). 
Next, we introduce a soft binary BP with memory effects (MBP$_2$), its adaptive version (AMBP$_2$), and their quaternary counterparts (A)MBP$_4$. 
A soft BP decoder mimics a soft GD optimization, providing strong convergence to degenerate solutions. 
For our constructed codes, we achieve near-capacity performance with $r = 1-2.5p^\text{(BP)}_\text{th}$ using (A)MBP or GD Flip-BP$_2$ for non-zero rate codes, with AMBP providing very strong accuracy at low error rates. In addition, AMBP$_4$ achieves the optimal threshold of 0.5 for vanishing-rate 2D topological codes.

The success of (A)MBP stems from multiple factors.
First, we apply message softization \eqref{eq:soft} to have non-extreme, and non-zero BP message values, enabling updates within stopping sets.
Second, we link BP to gradient descent to determine a proper update step-size and, when necessary, adjust the prior error distribution to ensure sufficient update gradients, further mitigating harmful stopping sets. 
Third, we extend binary decoders to quaternary ones, leveraging correlations between the $X$-portion and \mbox{$Z$-portion} of a QLDPC check matrix to prevent BP from being halted by stopping sets in one portion.
Moreover, inspired by Hopfield nets \cite{Hop84, HT85, HT86}, 
(A)MBP uses fixed inhibition (see \ed{Supplementary Note~5 (part~b)}), 
adding memory effects to BP to improve convergence.
Additionally, we employ a group random schedule for message updates to improve convergence, similar to a serial schedule \cite{KL20}, 
while processing multiple messages in parallel (see \ed{Supplementary Note~8}).
Finally, we propose a strategy to make AMBP as efficient as MBP to facilitate practical applications.

We also explore the potential of our decoders to manage deletions of untracked particle losses (with the assistance of permutation-invariant (PI) codes \cite{Yin14,Yin21}) or mixed erasure and depolarizing errors.

\section*{Results} \label{sec:results}

\begin{figure}
    \centering \includegraphics[width=0.485\textwidth]{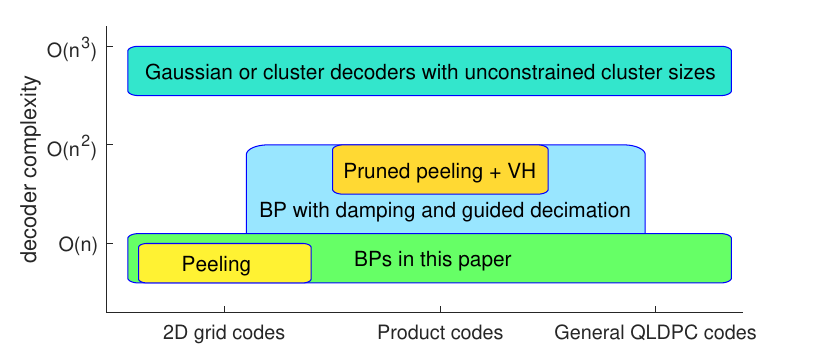}
    \caption{Comparison of decoders.
    The vertical axis indicates complexity, while the horizontal axis shows code types with increasing connectivity from left to right.
    BPs in this paper perform accurately across various code types, similar to the Gaussian decoder, with a slight threshold gap but with linear complexity. 
    The achieved thresholds are shown in Fig.~\ref{fig:bnd_logx}.
    } \label{fig:var_decs} 
\end{figure}

\begin{figure}[t!] 
    \centering \includegraphics[width=0.5\textwidth]{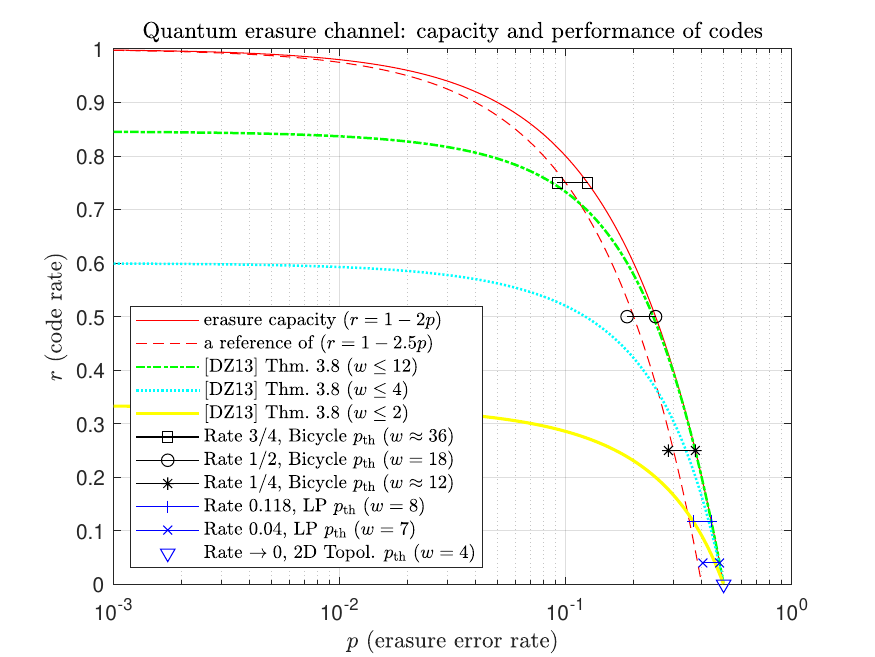}
    \caption{
    Quantum erasure capacity and thresholds of codes.
    For 2D topological codes, BP and MLD share a threshold of $p_\text{th} = 0.5$. 
    For other codes, a small accuracy gap is indicated by two points: $p^\text{(BP)}_\text{th}$ (left) and $p^\text{(MLD)}_\text{th}$ (right).
    The constructed codes achieve capacity with MLD, $r = 1 - 2 p^\text{(MLD)}_\text{th}$, and near-capacity with BP, $r = 1 - 2.5 p^\text{(BP)}_\text{th}$.
    We summarize our decoders and tabulate their accuracy figures in Table~\ref{tb:cmp_dec}.
    For bicycle codes, BP thresholds are achieved using GD~Flip-BP$_2$ or (A)MBP$_q$.
    For LP codes, BP thresholds are achieved using AMBP$_q$, and GD Flip-BP$_2$ shows similar accuracy for rate 0.118 LP codes.
    For 2D topological codes, a threshold of 0.5 is achieved using AMBP$_4$.
    %
    Higher-rate codes require larger stabilizer weight $w$, consistent with the trend from \cite{DZ13} (plotted as curves labeled [DZ13]). 
    } \label{fig:bnd_logx} 
\end{figure}

\begin{table*}
\caption{
    Comparison of linear-time and MLD decoders for quantum erasure errors on data qubits.
    } \label{tb:cmp_dec}
\centering 
{\footnotesize 
\resizebox{\textwidth}{!}{ 
\begin{tabular}{|c|c|c|c|c|c|}
\hline
decoder & type & \makecell{optimized\\ complexity} & \makecell{worst case\\ complexity} & tested codes & comment \\
\hline
\makecell{Peeling \cite{DZ20}} & \makecell{spanning-tree\\(MLD for\\ 2D codes)} & \makecell{$O(n)$\\ (toric/surface)} & \makecell{$O(nT_{\max})$} & \makecell{2D gird (toric/surface) \cite{DZ20}; color codes\\ need additional complexity \cite{AS19,SS23,LMS20}} & \makecell{spanning-tree step is linear-time \cite{CLRS09}; 
peeling\\ step is iterative, but guaranteed to finish within\\ $T_{\max}=O(1)$ iterations for toric/surface codes.
}\\
\hline
\makecell{Pruned peeling\\ + VH \cite{CLLD24}} & \makecell{spanning tree\\ + local GE 
} & \makecell{$O(n^{1.5})$\\(if use \cite{Wie86})} & $O(n^2)$ & HP & 
\makecell{Peeling needs VH to decode HP codes\\ but underperforms AMBP}\\ 
\hline
\makecell{BP with damping and\\
guided decimation \cite{GYP24}} & \makecell{soft BP + guided\\ decimation} & \makecell{$O(n)$ (reduced\\ accuracy)} & $O(n^2 T_{\max})$ & HP and GHP & 
\makecell{outperform peeling+VH, although\\ not as accurate as AMBP}\\ 
\hline
\makecell{Peeling + cluster\\ decomposition  \cite{YGP24}} & \makecell{spanning tree\\ + tuneable GE} 
& \makecell{$O(n)$ (reduced\\ accuracy)} & \makecell{$O(n^3)$\\ (MLD)} & HP and GHP & 
\makecell{flexible accuracy/complexity tradeoff;\\ can beat BPs but need complexity $O(n^3)$}\\
\hline
\hline
\makecell{GD Flip-BP$_2$\\ ({\bf this paper},\\ {\bf Algorithms~\ref{alg:GD_BP2}})} & bit-flipping BP & $O(n\log\log n)$ & $O(nT_{\max})$ 
& \makecell{LP (Fig.~\ref{fig:decs_on_LP118}), bicycle (Fig.~\ref{fig:bic_wt_up}),\\ HP and GHP (Fig.~\ref{fig:GHP})} 
& \makecell{performs well on bicycle and rate 0.118 LP\\ codes; extremely low complexity} \\
\hline
\makecell{(A)MBP$_q$ ($q=2$ or 4)\\ ({\bf this paper},\\ {\bf Algorithms~\ref{alg:MBP2}--\ref{alg:BP_era}})} & soft BP & $O(n\log\log n)$ & $O(nT_{\max})$ 
& \makecell{LP (Fig.~\ref{fig:LP118}), toric and XZZX (Fig.~\ref{fig:2D_lin}),\\ HP and GHP (Fig.~\ref{fig:GHP}),\\ bicycle (Supplementary Note~7 (part~c))} 
& \makecell{accurate and universal for QLDPC codes; also\\ 
decode mixed errors with high accuracy (Fig.~\ref{fig:Eras_Dep})} \\
\hline
\makecell{Gaussian elimination\\ (GE) (see \eqref{eq:dec_sys_eqs}--\eqref{eq:dec_era_eqs})} & MLD & \makecell{$O(n^{2.373})$\\ (see \cite{Wil12,LeG14})} & $O(n^3)$ & \makecell{we provide the accuracy achieved by\\ Gaussian decoder for all tested codes} & \makecell{MLD is difficult to simulate at low error rates,\\ but eBDD provides good curve extrapolation} \\
\hline
\end{tabular}
}\\[5pt] 
{\bf Upper portion: decoders in the literature; Lower portion: our decoders and GE.}
GHP codes are considered as special LP codes 
    \cite{PK22,GYP24,YGP24}.\\ 
We construct flexible-rate and flexible-length LP codes from random quasi-cyclic matrices and bicycle codes with random generator vectors. 
High-rate bicycle codes ($r=3/4$) are particularly suitable to work with PI codes for local deletion errors (cf.~Fig.~\ref{fig:Convert}). 
Bicycle codes are generalized for fault-tolerant error correction on IBM hardware implementations, emphasizing their non-zero code rate and potential for nearly 2D layouts \cite{Bra+24}.
\ed{Figures comparing our decoders with those in \cite{CLLD24,GYP24,YGP24} are provided in Supplementary Note~7 (part~a).}
}
\end{table*}

Figure~\ref{fig:var_decs} compares the decoding complexity of various decoders. 
BPs developed in this paper have accuracy comparable to the Gaussian decoder (MLD), but with linear complexity.
The thresholds achieved by BP and MLD are analysed and compared with the channel capacity in Fig.~\ref{fig:bnd_logx}.
Physical fidelity decays exponentially over time (characterized by the decoherence time \cite{Blu+22,MLP+23}).
We plot the erasure error rate $p$ on a log scale to depict the exponential effort to maintain coherence.
In Fig.~\ref{fig:bnd_logx}, except for the 2D topological codes, each code family is represented by two data points, indicating the thresholds achieved by BP and MLD.
There is a small gap between the accuracy of BP (linear-time decoder) and MLD (non-linear time decoder).
The each code family has nearly fixed row and column weights to ensure linear BP complexity 
(see \ed{Supplementary Note~7 (part~b)}).

In Fig.~\ref{fig:bnd_logx}, we also plot rate upper bounds estimated by \cite[Theorem~3.8]{DZ13}, which depend on stabilizer weight $w$. 
While not necessarily tight, these bounds align with our findings in that higher code rates require higher $w$.

To obtain a precise estimate of the rate currently achieved via BP at erasure probability $p$, we replot Fig.~\ref{fig:bnd_logx} on a linear scale in 
\ed{Supplementary Note~7 (part~f)}. 
By interpolation, the rate achieved by BP is $r = 1 - 2.5p - 2p^2 + 6p^3$.

In Table~\ref{tb:cmp_dec}, we compare our results with other linear-time or MLD decoders in the literature \cite{DZ20,CLLD24,GYP24,YGP24}.

\subsection*{Quantum erasure channel} \label{sec:eras_ch}

Erasures occur when certain qubits are lost, with a detector identifying their locations. 
A lost qubit is represented by an ``erasure state'' $\ket{2}$, orthogonal to the computational basis states $\ket{0}$ and $\ket{1}$. 
Thus, we consider \cite{BDS97,DZ13}:
\begin{definition}
A quantum erasure channel independently erases each qubit $\rho_1$ with probability $p \le 0.5$, leaving it unchanged otherwise, 
resulting in the following process:
    \begin{equation} \label{eq:eras_ch}
    \textstyle
    \rho_1 \mapsto (1-p)\rho_1 + p\ket{2}\bra{2},
    \end{equation}
with the location of the erased qubit known.
\end{definition}
Assume that an $[[n,k]]$ stabilizer code is used.
For decoding, we insert an ancilla qubit at each lost qubit's location, creating a noisy codeword state of length $n$ for stabilizer measurements. After measurements, each ancilla qubit collapses randomly, encountering Pauli $I, X, Y, \text{ or } Z$ with equal probability $1/4$.
Thus, an erased qubit (with its location known) encounters the following error before decoding:
    \begin{equation} \label{eq:like_dep_ch}
    \textstyle
    \rho_1 \mapsto (1-p) \rho_1 + p\frac{I}{2},
    \end{equation}
where
    $\frac{I}{2}= \sum_{W\in\{I,X,Y,Z\}}\frac{1}{4}W\rho_1W^\dagger$.
The error process \eqref{eq:like_dep_ch} is similar to the depolarizing channel, but with a key distinction: we know which qubits are completely depolarized.

Let $\br\subseteq\{1,2,\dots,n\}$ be the set of \emph{erased coordinates}. 
If $j\in\br$, then qubit $j$ is erased.
The complement, ${\bar\br}={\{1,2,\dots,n\}\setminus \br}$, contains the \emph{non-erased coordinates}.

Let $E= {E_1\otimes \dots\otimes E_n}$ be the $n$-fold Pauli error (up to global phase) after measurements, 
where $E_j=I,X,Y, \text{ or } Z$ with equal probability $\frac{1}{4}$ if $j\in\br$, and $E_j=I$ if $j\in\bar{\br}$. 

A stabilizer decoder infers $E$ from $\br$ and the \emph{syndrome}~of~$E$ obtained through measurements, as detailed below.

\subsection*{Erasure decoding problem in stabilizer codes} \label{sec:era_dec}

We refer to \cite{CRSS98,GotPhD} for background on stabilizer codes. 
For decoding, we represent the measured stabilizers $H^{(1)}, \dots, H^{(n-k)}$ as a \emph{(quaternary) Pauli check matrix}
	\begin{equation} \label{eq:H4}
	H = \left[\begin{smallmatrix}
	H^{(1)}\\ \vdots\\ H^{(n-k)}\\
	\end{smallmatrix}\right]
	\in \{I,X,Y,Z\}^{(n-k)\times n}.
	\end{equation}
To have a linear decoding problem, we map $I,X,Y,Z$ to $(0|0), (1|0), (1|1), (0|1)$, respectively. 
Conventionally, this defines a map $\varphi: \{I,X,Y,Z\}^n \to \{0,1\}^{2n}$, and we have a $(n-k)\times 2n$ \,\emph{binary check matrix} 
    \begin{align} 
      \vH 
      &=\left[\begin{smallmatrix}
        \varphi(H^{(1)})\\ 
        \vdots\\ 
        \varphi(H^{(n-k)})
	\end{smallmatrix}\right]
	 =
	\left[\begin{smallmatrix}
	\vH_1 \\ 
        \vdots\\ 
        \vH_{n-k}
	\end{smallmatrix}\right]
	 =
	\left[\begin{smallmatrix}
	\vH_{1,1}	&\dots	&\vH_{1,2n}	\\ 
	\vdots		&\ddots	&\vdots		\\
	\vH_{n-k,1}	&\dots	&\vH_{n-k,2n}
	\end{smallmatrix}\right] \label{eq:H2}\\
	&= [\vH^X\,|\,\vH^Z]
	 =
		\left[\begin{smallmatrix}
		\vH^X_{1,1}	&\dots	&\vH^X_{1,n}	\\ 
		\vdots		&\ddots	&\vdots		\\
		\vH^X_{n-k,1}&\dots	&\vH^X_{n-k,n}
		\end{smallmatrix}\middle|
		\begin{smallmatrix}
		\vH^Z_{1,1}	&\dots	&\vH^Z_{1,n}	\\ 
		\vdots		&\ddots	&\vdots		\\
		\vH^Z_{n-k,1}&\dots	&\vH^Z_{n-k,n}
		\end{smallmatrix} \right]. \label{eq:HXHZ}
    \end{align} 
Consider the operation under ${\rm mod}\ 2$ for the binary field $\mathbb F_2  = \{0,1\}$.
For two binary matrices $\vA$ and $\vB$ of $2n$ columns, define a bilinear form (called \emph{symplectic inner product})
    $$
    \langle \vA,\vB \rangle = \vA\vLb \vB^T,
    $$
where 
  $\vLb = \left[ {\bm{0}_n\atop \bm{1}_n} {\bm{1}_n\atop \bm{0}_n} \right]$
and $(\cdot)^T$ is the matrix transpose.
Two operators $E,F\in\{I,X,Y,Z\}^n$ commute if the corresponding $\vE,\vF$ satisfy $\langle \vE,\vF \rangle = 0$, and anticommute if $\langle \vE,\vF \rangle = 1$. 
We denote zero and identity matrices by $\bm{0}$ and $\bm{1}$, respectively, using the subscript $n$ to specify dimension $n \times n$.
Notice that  $\langle\vH,\vH\rangle=\bm{0}$. 
The row-space of $\vH$ is a linear code 
$C = \text{Row}(\vH) \subset\mathbb F_2^{2n}$, which is symplectic self-orthogonal. 
That is, $C\subseteq C^\perp$, where $C^\perp$ is the symplectic dual of $C$ in $\mathbb F_2^n$.

For any stabilizer codes, we have this {\bf degeneracy}:
for an error $\vE\in\{0,1\}^{2n}$, its \emph{logical coset} is $\vE+C$; any errors $\tilde{\vE} \in \vE+C$ are \emph{degenerate errors} of $\vE$ with an equivalent effect on the code space.
Degenerate errors share the same syndrome $\vs = \langle \vE, \vH \rangle = \langle \tilde{\vE}, \vH \rangle$, but syndrome matched errors form a larger error set called \emph{error coset} ($\vE+C^\perp$) of syndrome $\vs$.

\begin{definition} \label{def:main_QMLD}
(Quantum erasure decoding problem) 
Given a binary matrix $\vH\in\{0,1\}^{m\times 2n}$, $m\ge n-k$, erasure information $\br\subseteq\{1,2,\dots,n\}$, and a syndrome $\vs\in\{0,1\}^m$, where $\vs=\langle \vE,\vH \rangle $ for some $\vE\in\{0,1\}^{2n}$ drawn from the erasure channel, a decoder is asked to find an $\hat\vE\in\{0,1\}^{2n}$ such that $\hat{\vE} \in \vE+C$ with probability as high as possible.
\end{definition}

Given $\br$ and $\vs$, an erasure- and syndrome-matched solution satisfies a system of linear equations:
\begin{equation} \label{eq:dec_sys_eqs}
    \begin{cases}
    ~ (\vH\vLb) \vE^T 
        = [\vH^Z|\vH^X] (\vE^X|\vE^Z)^T = \vs^T, \\
    ~ (\vE^X_j|\vE^Z_j) 
        = (0|0) ~~ \forall\, j\in\bar{\br}.
    \end{cases}
\end{equation} 
\begin{definition}
A solution $\hat{\vE}\in\{0,1\}^{2n}$ satisfying \eqref{eq:dec_sys_eqs} is called a \emph{feasible solution} for the erasure decoding.
\end{definition}
Solving \eqref{eq:dec_sys_eqs} can be simplified by solving 
\begin{equation} \label{eq:dec_era_eqs}
    [\vH^{Z}_\br|\vH^{X}_\br](\vE^{X}_\br|\vE^{Z}_\br)^T = \vs^T,
\end{equation}
where $A_\br$ denotes the submatrix of a matrix $A$ restricted to the columns specified by $\br$. 
The cardinality  ${|\br|\approx np \propto n}$. 
A~solution to \eqref{eq:dec_sys_eqs} or \eqref{eq:dec_era_eqs} can be obtained by Gaussian elimination with complexity $O(n^3)$.
Since the syndrome $\vs$ is produced from the physical error $\vE$, there always exists at least one feasible solution (e.g., $\vE$ itself) to the linear equations in \eqref{eq:dec_sys_eqs}. However, the solution may not be unique, especially at high erasure error rates.
The uniqueness of the solution depends on $|\br|$ and the code distance $d$ 
(discussed in detail in Supplementary Note~2).
An $[[n,k]]$ code with distance $d$ is called an $[[n,k,d]]$ code.

Other error models, such as those with syndrome errors or circuit-level noise, can lead to non-unique solutions even at low physical error rates; these cases are beyond the scope of this paper.
	We assume noiseless syndrome extraction. 
	Similar deductions (as those in Lemma~\ref{lm:set_eq}, Theorem~\ref{th:MLD}, 
	and Supplementary Note~1) 
	still apply under syndrome errors or circuit-level noise, given appropriate check matrices and logical generators. Decoding still proceeds based on the check matrix constructed as in \cite{KL25,KL24a}, but the problem becomes more challenging, as it may yield many degenerate solutions even at low physical error rates.

Paralleling \cite[Lemma~1]{DZ20}, we present the following lemma and theorem, proved by group theory and probability theory in 
Supplementary Note~1.

\begin{lemma} \label{lm:set_eq}
If two distinct logical cosets both contain feasible solutions,
then the number of feasible solutions in both cosets is equal, and they both are maximum likelihood logical cosets.   
\end{lemma}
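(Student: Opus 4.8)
\emph{Proof strategy.} The plan is to reduce the statement to finite-dimensional linear algebra over $\mathbb{F}_2$: I would describe the set of feasible solutions as an affine subspace and then compute exactly how a logical coset of $C$ slices it.

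First I would fix notation. Let $V\subseteq\mathbb{F}_2^{2n}$ be the linear subspace of error vectors supported on the erased coordinates, $V=\{\vE:(\vE^X_j|\vE^Z_j)=(0|0)\text{ for all }j\in\bar{\br}\}$, so $\dim V=2|\br|$. The syndrome map $\vE\mapsto\langle\vE,\vH\rangle$ is $\mathbb{F}_2$-linear and its kernel is $C^\perp$, by definition of the symplectic dual of $C=\text{Row}(\vH)$. Hence the feasible set $\mathcal{F}:=\{\vE\in V:\langle\vE,\vH\rangle=\vs\}$ is the intersection of $V$ with an affine subspace, so it is either empty or of the form $\vF_0+(V\cap C^\perp)$ for any chosen feasible $\vF_0$; by hypothesis it is nonempty.

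Next I would compute $\mathcal{L}\cap\mathcal{F}$ for a logical coset $\mathcal{L}=\vF+C$ that meets $\mathcal{F}$, taking the representative $\vF$ itself to be feasible. For any $\bm{c}\in C$, the vector $\vF+\bm{c}$ still has syndrome $\vs$, because $\bm{c}\in C\subseteq C^\perp$ lies in the kernel of the syndrome map --- this is exactly the self-orthogonality $\langle\vH,\vH\rangle=\bm{0}$ recorded in the excerpt. So $\vF+\bm{c}\in\mathcal{F}$ iff $\vF+\bm{c}\in V$, and since $\vF\in V$ this is equivalent to $\bm{c}\in V$. Therefore $\mathcal{L}\cap\mathcal{F}=\vF+(C\cap V)$, and $|\mathcal{L}\cap\mathcal{F}|=|C\cap V|$, a number independent of $\mathcal{L}$. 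This proves the first claim: any two logical cosets that contain feasible solutions contain exactly the same number, $|C\cap V|$, of them.

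Finally, for the maximum-likelihood part I would invoke the effective erasure channel \eqref{eq:like_dep_ch}: conditioned on $\br$, the Pauli error is uniform over the $4^{|\br|}$ vectors of $V$ and has zero prior off $V$. Conditioning further on the syndrome, the posterior probability of a logical coset $\mathcal{L}$ is proportional to $\sum_{\tilde\vE\in\mathcal{L}}\Pr[\tilde\vE\mid\br]$, i.e. to $|\mathcal{L}\cap\mathcal{F}|$. Cosets disjoint from $\mathcal{F}$ thus have posterior $0$, while every coset meeting $\mathcal{F}$ has posterior proportional to the same positive constant $|C\cap V|$, so all of them tie for the maximum and hence are maximum-likelihood logical cosets. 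There is no serious obstacle in any of this --- it is all coset bookkeeping in $\mathbb{F}_2^{2n}$ --- but two points deserve care: that adding a stabilizer preserves the syndrome (which relies on $C\subseteq C^\perp$) and that the erasure prior is genuinely flat on $V$ (which comes from \eqref{eq:like_dep_ch}); the argument is parallel to \cite[Lemma~1]{DZ20}.
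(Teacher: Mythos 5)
Your proof is correct and follows essentially the same route as the paper: the paper establishes the equal count by translating one coset's feasible solutions onto the other's via the erasure-supported difference vector $\vE-\vF$ (using $C\subseteq C^\perp$ so syndromes are preserved), and gets the ML claim from the flat erasure prior making every feasible solution equally likely, exactly as you do via \eqref{eq:like_dep_ch} and Bayes' rule. Your phrasing is a mild repackaging that additionally identifies the common count explicitly as $|C\cap V|$ (each nonempty intersection being a coset of $C\cap V$), which is equivalent to the paper's bijection argument.
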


\begin{theorem} \label{th:MLD}
For a stabilizer code with a check matrix $\vH\in\{0,1\}^{m\times 2n}$, if erasures $\br\subseteq\{1,2,\dots,n\}$ and syndrome $\vs\in\{0,1\}^m$ are given for decoding, 
then any feasible solution to \eqref{eq:dec_sys_eqs} is an optimal solution for the decoding problem in Def.~\ref{def:main_QMLD}.
A decoder always outputting a feasible solution is an MLD.
\end{theorem}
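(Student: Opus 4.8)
The plan is to show that every feasible solution lies in a maximum-likelihood logical coset, which by Def.~\ref{def:main_QMLD} makes it an optimal output, and that consequently any decoder returning a feasible solution attains the optimal success probability. First I would recall the structure of the solution set of \eqref{eq:dec_sys_eqs}: all feasible solutions form an affine set that is partitioned into logical cosets $\vE+C$, and all of these cosets together constitute the error coset $\vE+C^\perp$ of the true syndrome $\vs$ (since $\vs = \langle \tilde\vE,\vH\rangle$ exactly characterizes membership in $\vE+C^\perp$). The posterior probability that the true error lies in a given logical coset is proportional to the number of feasible solutions that coset contains (each erased-qubit Pauli being uniform by \eqref{eq:like_dep_ch}, each feasible representative carries equal weight, and non-erased coordinates are pinned to $(0|0)$). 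Hence the MLD should output a representative of whichever logical coset contains the most feasible solutions.

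The key step is then Lemma~\ref{lm:set_eq}: if a logical coset contains a feasible solution at all, it contains the maximum number of them, i.e.\ every feasible coset is simultaneously a maximum-likelihood coset. Given that lemma, any feasible solution $\hat\vE$ sits in some logical coset $\hat\vE+C$ that does contain a feasible solution (namely $\hat\vE$ itself), so by the lemma that coset is a maximum-likelihood coset, and outputting $\hat\vE$ is an optimal decision in the sense of Def.~\ref{def:main_QMLD}. I would spell this out: the decoder's success probability equals the posterior mass of the returned coset, which is maximal, so no decoder can do better; this is exactly what ``MLD'' means here.

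For the second sentence of the theorem --- that a decoder always outputting a feasible solution \emph{is} an MLD --- I would note that the argument above holds for every possible input pair $(\br,\vs)$ arising from the channel, so the stated decoder achieves the optimal posterior mass on every input and therefore the optimal overall success probability; this is the definition of a maximum-likelihood decoder. I would also remark that feasibility is always achievable (the true error is itself feasible), so the decoder is well-defined, and that solving \eqref{eq:dec_era_eqs} by Gaussian elimination is one concrete way to produce such a solution.

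The main obstacle is really bundled into Lemma~\ref{lm:set_eq}, which the excerpt defers to Appendix~\ref{sec:coset_and_ML}; the ``count of feasible solutions is constant across feasible cosets'' claim is the crux and relies on a group-theoretic bijection between feasible representatives of any two feasible logical cosets (translation by a fixed coset representative, checking that the erased/non-erased support constraints are preserved). Since I am allowed to assume Lemma~\ref{lm:set_eq}, the remaining work here is the comparatively routine bookkeeping: identifying the posterior with the feasible-solution count, and translating ``maximum-likelihood logical coset'' into ``optimal solution'' in the precise sense of Def.~\ref{def:main_QMLD}. I would be careful to state explicitly why uniform priors on each erased Pauli make the posterior proportional to the count (rather than a weighted sum), since that is where the erasure channel's special structure --- knowing \emph{which} qubits are fully depolarized --- enters.
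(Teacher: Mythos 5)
Your proposal is correct and follows essentially the same route as the paper: it uses the erasure-channel structure to show that every feasible solution carries equal posterior weight, so a logical coset's posterior is proportional to its count of feasible solutions, and then invokes Lemma~\ref{lm:set_eq} to conclude that every coset containing a feasible solution is a maximum-likelihood coset, exactly as in Appendix~\ref{sec:coset_and_ML} (cf.~\eqref{eq:P_const} and \eqref{eq:P_deg}). The only difference is presentational (you bundle the paper's intermediate Lemma~\ref{lm:sol_eq} into the observation that all feasible solutions are equiprobable), which does not change the substance.
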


Obtaining a feasible solution is the best a decoder can do for erasure decoding, although it does not always guarantee successful decoding, especially when $|\br|$ is large.

\begin{example} \label{ex:4_1_code}
Consider a $[[4,1]]$ code with this check matrix
	\begin{align}
	\vH = [\vH^X\,|\,\vH^Z] &=
        \left[\begin{smallmatrix}
	1&0&0&0\\
	0&1&0&1\\
	0&0&1&1\\
	\end{smallmatrix}\middle|
	\begin{smallmatrix}
	0&0&1&0\\
	0&1&0&1\\
	1&0&0&1\\
	\end{smallmatrix}\right], 
    \label{eq:4_1_code_H2} 
	\end{align}
or equivalently, its Pauli check matrix
	\begin{equation} \label{eq:4_1_code_H4}
	H = \left[\begin{smallmatrix}
	X&I&Z&I\\
	I&Y&I&Y\\
	Z&I&X&Y\\
	\end{smallmatrix}\right].
	\end{equation}
There are $2^{2k} = 4$ logical cosets for a given syndrome $\vs$.
When $\vs = (0\,1\,0)$, the corresponding 4 logical cosets, in Paulis, are: 
    \begin{equation*}
    \left[\begin{smallmatrix}
    I&Z&I&I&\leftarrow\\
    X&Z&Z&I\\
    I&X&I&Y&\leftarrow\\
    X&X&Z&Y\\
    Z&Z&X&Y\\
    Y&Z&Y&Y\\
    Z&X&X&I\\
    Y&X&Y&I\\
    \end{smallmatrix}\right], 
    \quad
    \left[\begin{smallmatrix}
    I&I&Z&X\\
    X&I&I&X\\
    I&Y&Z&Z\\
    X&Y&I&Z\\
    Z&I&Y&Z\\
    Y&I&X&Z\\
    Z&Y&Y&X\\
    Y&Y&X&X\\
    \end{smallmatrix}\right], 
    \quad
    \left[\begin{smallmatrix}
    I&I&Z&Z\\
    X&I&I&Z\\
    I&Y&Z&X\\
    X&Y&I&X\\
    Z&I&Y&X\\
    Y&I&X&X\\
    Z&Y&Y&Z\\
    Y&Y&X&Z\\
    \end{smallmatrix}\right], 
    \quad
    \left[\begin{smallmatrix}
    I&Z&I&Y&\leftarrow\\
    X&Z&Z&Y\\
    I&X&I&I&\leftarrow\\
    X&X&Z&I\\
    Z&Z&X&I\\
    Y&Z&Y&I\\
    Z&X&X&Y\\
    Y&X&Y&Y\\
    \end{smallmatrix}\right],
    \end{equation*} 
where left arrows indicate the errors to be discussed.

When only the second qubit is erased, we easily identify a unique solution $IZII$. MLD always succeeds in this error case.

When the 2nd and 4th qubits are erased, i.e., $\br = \{2,4\}$, we have $E_j=I$ for $j\in\bar{\br}=\{1,3\}$. 
As indicated by the left arrows above,
the occurred error is a Pauli operator in either 
	\begin{equation} \label{eq:opt_cosets}
	\left[\begin{smallmatrix}
	I&Z&I&I\\
	I&X&I&Y\\
	\end{smallmatrix}\right]
	\quad\text{or}\quad
	\left[\begin{smallmatrix}
	I&Z&I&Y\\
	I&X&I&I\\
	\end{smallmatrix}\right].
	\end{equation}
An MLD will output a solution in one of the cosets, but the actual error may lie in the other coset, leading to a logical error rate of $1/2$ in this error case.

\ed{Using the above error cases, we illustrate the coset structure in Supplementary Note~1 (see S~Fig.~1 therein).}
\end{example}

A good code can resolve many erasures with $|\br| = np$ at large $p$, and we aim to resolve them in linear time.

\subsection*{Bit-Flipping BP algorithm}  \label{sec:flip_BP}

Given \eqref{eq:dec_sys_eqs}, BP can approximate the solution with high accuracy, by running a message passing on the Tanner graph induced by the code's check matrix. 
For classical BP, see \cite{Gal63,Tan81,Pea88,Mac99,KFL01}. 
For quantum decoding, we can perform binary decoding based on a binary check matrix $\vH\in\{0,1\}^{m\times 2n}$, or a quaternary decoding based on a Pauli check matrix $H\in\{I,X,Y,Z\}^{m\times n}$. 
For example, matrices in \eqref{eq:4_1_code_H2} and \eqref{eq:4_1_code_H4} induce the two Tanner graphs shown in Fig.~\ref{fig:4_1_code}.

For a Tanner graph, let $\sN(i)$ be the set of neighbouring variable node indices of check node $i$.
Let $\sM(j)$ be the set of neighbouring check node indices of variable node~$j$.
For example, for a Tanner graph induced by an $\vH\in\{0,1\}^{m\times 2n}$,
	\begin{align*}
	\sN(i) &= \{j\in\{1,2,\dots,2n\}: \vH_{ij}=1\},\\
	\sM(j) &= \{i\in\{1,2,\dots,m\}: \vH_{ij}=1\}.
	\end{align*}

\begin{figure}
	\hspace*{-4mm}\centering\includegraphics[width=0.54\textwidth]{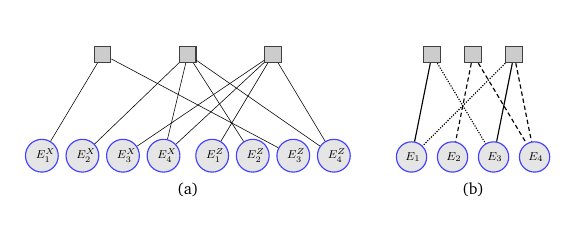}
	\caption{
        (a) Tanner graph induced by the binary matrix $\vH$ in \eqref{eq:4_1_code_H2}.
        (b) Tanner graph induced by the Pauli matrix $H$ in \eqref{eq:4_1_code_H4}, with edge types $X$ (solid line), $Y$ (dashed line), and $Z$ (dotted line).
	} \label{fig:4_1_code}		
\end{figure}

For BP erasure decoding, important code properties include \emph{stopping sets} and \emph{girth}. 
For details, see Supplementary Note~3. 
However, due to the degeneracy of stabilizer codes, the solutions are generally not unique. 
Specifically, the number of feasible solutions in a logical coset (Lemma~\ref{lm:set_eq}) increases with the number of erasures (cf.~Example~\ref{ex:4_1_code}). 
When decoding a code, we aim for the decoder to combat an erasure rate $p$ as high as possible, which leads to many erasures and, consequently, many feasible solutions in the logical coset where the actual error is located.
In particular, we observe strong symmetry of degenerate errors (see \cite{PC08} or more details in 
Supplementary Note~4).
This motivates us to use heuristics to converge to any feasible solution.
Thus, we propose a bit-flipping algorithm with a gradient descent (GD) step, referred to as GD Flip-BP$_2$, as outlined in Algorithm~\ref{alg:GD_BP2}. 
(This algorithm behaves more like a greedy approach to minimize a discrete function~\eqref{eq:obj}, but can be linked to GD optimization for a differentiable function~\eqref{eq:soft_obj}.)

We explain Algorithm~\ref{alg:GD_BP2}.
The binary check matrix has $2n$ columns. 
Given ${ \br\subseteq\{1,2,\dots,n\} }$, 
we have
	$${ \br_2=\br\cup(\br+n) },$$ 
where 
	$\br+n = \{j+n:j\in\br\}$.
	%
The GD step performs when no updates are performed during an iteration.
This step tries to minimize an objective function defined as the sum of the numbers of unresolved bits in each check, expressed as 
    \begin{equation} \label{eq:obj}
    \textstyle \sum_{i=1}^m |\sN(i)\cap\br_2|.
    \end{equation}

\begin{figure}
\begin{algorithm}[H]
\begin{flushleft}
\caption{: GD Flip-BP$_2$} \label{alg:GD_BP2}
	{\bf Input}:\\ 
	\quad A binary matrix $\vH\in\{0,1\}^{m\times 2n}$, a~syndrome $\vs\in\{0,1\}^m$,\\ 
	\quad erased coordinates $\br_2\subseteq\{1,2,\dots,2n\}$, 
	and an~integer $T_{\max}>0$. \\
    \quad ($\vH=[\vB^Z|\vB^X]$ from a binary check matrix $\vB=[\vB^X|\vB^Z]$.) \\[2pt]

	{\bf Initialization.} 
	\begin{itemize}
	\item Let $\Gamma_j = 0$ if $j\in\br_2$, and $\Gamma_j = +1$ if $j\in\bar{\br}_2$.
	\end{itemize}

	{\bf Iterative Update.} 
	\begin{itemize}
	\item 
		For $i\in\{1,2,\dots,m\}$: 
        Compute $\Delta_i = |\sN(i)\cap\br_2|$;
        Then if $\Delta_i=1$,
		let $j$ be the single element in $\sN(i)\cap{\br_2}$ and set
		\begin{align*}
		\Gamma_j = (-1)^{\vs_i}\prod_{j'\in\sN(i)\setminus \{j\}} \Gamma_{j'}.
		\end{align*}

        \item (GD Step) If no $\Gamma_j$ is updated in the previous sub-step:
        Find a $j\in\br_2$ such that the corresponding column in $\vH_{\br_2}$ has a maximum column weight. Set $\Gamma_j = -1$.
  
        \item 
        If any $\Gamma_j$ is updated during this iteration, remove $j$ from $\br_2$.
	\end{itemize}

	{\bf Judgement.} 
	\begin{itemize}
	\item 
        If $\br_2 = \emptyset$: 
        For all $j=1,\dots,2n$,
        if $\Gamma_j=+1$, set $\hat\vE_j = 0$, 
        and if $\Gamma_j=-1$, set $\hat\vE_j = 1$. 
        Generate $\hat{\vs} = \langle\hat{\vE},\vB\rangle = \hat{\vE}\vH^T$.
            \begin{itemize}
                \item if $\hat{\vs} = \vs$: return ``CONVERGE'';
                \item else: return ``GD\_FAIL''.
            \end{itemize}
	\item 
		Else: if the maximum number of iterations $T_{\max}$ is reached, halt and return ``FAIL''.
	\item 
		Repeat from the step of Iterative Update.	
	\end{itemize}
\end{flushleft}
\end{algorithm}
\end{figure}

\begin{figure}
\begin{algorithm}[H]
\begin{flushleft}
\caption{: MBP$_2$ (with an optional soft GD step)} \label{alg:MBP2}
	{\bf Input}:\\ 
	\quad A binary matrix $\vH\in\{0,1\}^{m\times 2n}$, a~syndrome $\vs\in\{0,1\}^m$,\\ 
    \quad an integer $T_{\max}>0$, a real $\alpha>0$, and initial LLRs \\
	\quad $\Lambda_j = \ln(p_j^{(0)}/p_j^{(1)})$ for $j=1,2,\dots,2n$ (cf.~\eqref{eq:main_p2_init}). 
    ~\\ 
    \quad ($\vH=[\vB^Z|\vB^X]$ from a binary check matrix $\vB=[\vB^X|\vB^Z]$.) 
    \\[5pt]
    \quad GD parameters: an integer $T_\text{GD}>0$ and a real $|\Lambda_\text{GD}|>0$.
    \\[5pt]

	{\bf Initialization.} 
	\begin{itemize}
	\item For $j\in\{1,2,\dots,2n\}$ and $i\in{\cal M}(j)$: 
		Let $\Gamma_{j\to i} = \text{soft}(\Lambda_j)$.
	\end{itemize}
		 
	{\bf Check-Node Computation.} 
    \begin{itemize}
    \item For $i\in\{1,2,\dots,m\}$: Let 
        $\Delta_i = (-1)^{\vs_i}\boxplus_{j\in\sN(i)} \Gamma_{j\to i}$.
    \item For $i\in\{1,2,\dots,m\}$ and $j\in\sN(i)$: Let 
        $\Delta_{i\to j} = \Delta_i \boxminus \Gamma_{j\to i}$.
    \end{itemize}

	{\bf Variable-Node computation.} 
	\begin{itemize}
	\item For $j\in\{1,2,\dots,2n\}$: Let 
        $\Gamma_j = \Lambda_j + (1/\alpha)\sum_{i\in\sM(j)} \Delta_{i\to j}$. 
    \item For $j\in\{1,2,\dots,2n\}$ and $i\in{\cal M}(j)$: Compute
        \begin{align}
        \Gamma_{j\to i} &= \Gamma_j - \Delta_{i\to j}, \label{eq:fix_inh}\\
        \Gamma_{j\to i} &= \text{soft}(\Gamma_{j\to i}). \notag
        \end{align}

    \item (Optional soft GD Step) When the number of iterations is a multiple of $T_\text{GD}$: 
    For $j\in\{1,2,\dots,2n\}$,
    \begin{equation}\label{eq:MBP2_GD}
    \text{if $|\Gamma_j|<|\Lambda_\text{GD}|$, then set $\Lambda_j = \text{sign}(\Gamma_j)\,|\Lambda_\text{GD}|$.}
    \end{equation}

    \end{itemize}

	{\bf Judgement.} 
	\begin{itemize}
        \item For all $j=1,\dots,2n$, if $\Gamma_j\ge 0$, set $\hat\vE_j = 0$, and if $\Gamma_j<0$, set $\hat\vE_j = 1$. 
        Generate $\hat{\vs} = \langle\hat{\vE},\vB\rangle = \hat{\vE}\vH^T$.
            \begin{itemize}
                \item if $\hat{\vs} = \vs$: return ``CONVERGE'';
                \item else if the maximum number of iterations $T_{\max}$ is reached: halt and return ``FAIL''.
                \item else: repeat from the Check-Node Computation step.	
            \end{itemize}
	\end{itemize}
\end{flushleft}
\end{algorithm}
\end{figure}

\subsection*{MBP$_2$ and its adaptive version (AMBP$_2$)} \label{sec:MBP2}

A soft-valued decoder improves decoding accuracy. 
A bit could be $0,1$ or erased (denoted by $?$). The values ${0, ?, 1}$ are mapped as ${+1, 0, -1}$ in GD Flip-BP$_2$ for computation.  
Once a decision is made, $?$ becomes a bit value 0 or 1 and remains fixed, though it may be incorrect.
Generally, we consider a soft-valued decoder, where the messages are represented as soft values, allowing them to adjust their signs and magnitudes during iterations, improving decision accuracy.

We describe this decoder generalization. We have the initial probabilities $p_j^{(0)}=\Pr(\vE_j=0)$ and $p_j^{(1)}=\Pr(\vE_j=1)$ for $j=1,2,\dots,2n$, where
    \begin{equation} \label{eq:main_p2_init}
    (p_j^{(0)},p_j^{(1)}) = (p_{j+n}^{(0)},p_{j+n}^{(1)}) =
        \begin{cases}
        (\tfrac{1}{2},\tfrac{1}{2}) \quad&\text{if $j\in\br$,}\\
        (1,0) \quad&\text{if $j\in\bar\br$.}
        \end{cases}
    \end{equation}
The initial \emph{log-likelihood ratios} (LLRs) are $\Lambda_j = \ln(p_j^{(0)}/p_j^{(1)})$ for $j=1,2,\dots,2n$. 
The (hard) values ${0, ?, 1}$ of $\vE_j$ map to ${+\infty, 0, -\infty}$ as $\Lambda_j$.
We will update these LLRs, and denote the updated LLRs as $\Gamma_j$, $j=1,2,\dots,2n$, to estimate~$\hat \vE_j$. 

We restrict variable-to-check messages from taking values in ${-\infty, 0, \infty}$, avoiding representing the three (hard) values ${0, ?, 1}$.  The function achieves our purpose:
    \begin{equation} \label{eq:soft}
    \text{soft}(x) = 
        \medmath{\begin{cases}
        \text{sign}(x)\,\text{LLR\_MAX}, &\text{if $|x|>\text{LLR\_MAX}$},\\
        \text{sign}(x)\,\text{LLR\_MIN}, &\text{if $|x|<\text{LLR\_MIN}$},\\
        x, &\text{otherwise}.
        \end{cases}}
    \end{equation}
This maintains LLR numerical stability and enables updates even within stopping sets. Suggested values for \text{LLR\_MAX} and \text{LLR\_MIN} are provided in 
Supplementary Note~9.

\begin{remark} \label{rm:init_no_soft} 
We do not apply the function $\text{soft}(\cdot)$ in \eqref{eq:soft} to initial LLRs $\Lambda_j$. 
This ensures $\Gamma_j=\Lambda_j=+\infty$ for $j\in\bar\br_2$, making our BP output always erasure-matched, though convergence (syndrome matching, $\hat\vs=\vs$) is not guaranteed.
\end{remark}

We will update LLRs to meet the target syndrome.
For two LLRs $\Gamma_1$ and $\Gamma_2$, following Gallager \cite{Gal63}, we define
    \begin{equation}
    \Gamma_1\boxplus\Gamma_2 := \text{sign}(\Gamma_1\Gamma_2) ~ f^{-1} \left( f(|\Gamma_1|) + f(|\Gamma_2|) \right),
    \end{equation}
where $|\cdot|$ is absolute value,  
    $f(x)=\ln\frac{e^x+1}{e^x-1}=\ln(\coth(\frac{x}{2}))$, 
and notice that $f^{-1}(x)=f(x)$.
An equivalent expression is 
    $\Gamma_1\boxplus\Gamma_2 = 2\tanh^{-1} \left( \tanh(\frac{\Gamma_1}{2}) \tanh(\frac{\Gamma_2}{2}) \right)$ \cite{KFL01},
and in general, 
    \begin{equation} \label{eq:bsum} 
	\overset{w}{\underset{j=1}{\boxplus}} \Gamma_j := 2\tanh^{-1}\left( {\textstyle \prod_{j=1}^w} \tanh\tfrac{\Gamma_j}{2} \right). 
    \end{equation}
To resolve all erasures (minimizing \eqref{eq:obj}) while satisfying the target syndrome, we define a check-satisfaction reward function for maximization:
    \begin{equation} \label{eq:soft_obj}
    \textstyle S(\Gamma_1,\Gamma_2,\dots,\Gamma_{2n}) := \sum_{i=1}^m ~ (-1)^{\vs_i} \underset{j\in\sN(i)}{\boxplus} \Gamma_j.
    \end{equation}

Algorithm~\ref{alg:MBP2} presents MBP$_2$, performing a soft message-passing to approximate a solution to \eqref{eq:dec_sys_eqs} by updating LLRs.
Specifically, it approximates $\Gamma_j$ to
    $\ln\medmath{\frac
    {\Pr(\vE_j=0~|~\text{syndrome is } \vs)}
    {\Pr(\vE_j=1~|~\text{syndrome is } \vs)}}$ 
by maximizing \eqref{eq:soft_obj} 
\ed{(see \eqref{eq:llr_tanh_a}--\eqref{eq:a_vs_g})}.

The check-to-variable messages $\Delta_{i\to j}$ are normalized by some $\alpha>0$, like classical normalized BP \cite{CF02a,CDE+05}.
Typically, ${\alpha \ge 1}$ is chosen to suppress overestimated messages. Here, we also consider ${\alpha < 1}$ to enlarge the BP step-size to search more possible solutions. 
Larger step-size may cause divergence.
Inspired by Hopfield nets \cite{Hop84, HT85, HT86}, MBP uses fixed inhibition \eqref{eq:fix_inh}, adding memory effects to BP to improve convergence even with ${\alpha<1}$ \cite{KL22} 
(see \ed{Supplementary Note~5 (part~b)}).

We can select a proper $\alpha$ value using GD optimization to maximize \eqref{eq:soft_obj}. 
In METHODS, we validate
this by linking BP update rules with GD. Given the physical error rate $p$, we can determine $\alpha$ from a locally linear function of $p$ using the continuity of the reward function \eqref{eq:soft_obj} (see \eqref{eq:llr_tanh_a}--\eqref{eq:a_vs_g}).
Furthermore, we introduce an optional step \eqref{eq:MBP2_GD} to adjust the initial LLR $\Lambda_j$ to have enough magnitude $|\Lambda_\text{GD}|$, ensuring enough update gradients in harmful stopping sets.  
This is performed whenever the updated LLR $\Gamma_j$ for index $j$ does not have enough magnitude  (i.e., $|\Gamma_j|<|\Lambda_\text{GD}|$) after every $T_\text{GD}$ iterations.
%

In BP, the message $\Delta_{i\to j}$ is typically computed by
    $
    \Delta_{i\to j} = (-1)^{\vs_i}\boxplus_{j'\in\sN(i)\setminus\{j\}} \Gamma_{j'\to i}.
    $
For efficiency, we compute it by 
    \begin{equation} \label{eq:bminus} 
    \Delta_{i\to j} \,= \,\Delta_i \boxminus \Gamma_{j\to i} \,:=\, 2\tanh^{-1}\left( \tanh\tfrac{\Delta_i}{2} \,\,/\, \tanh\tfrac{\Gamma_{j\to i}}{2} \right),
    \end{equation}
where 
    $
    \Delta_{i} = (-1)^{\vs_i}\boxplus_{j\in\sN(i)} \Gamma_{j\to i}
    $
is computed first 
and every $\Gamma_{j\to i}\ne 0$. 
This ensures $0 < |\Delta_i| < |\Gamma_{j\to i}|$ for numerical stability, which is guaranteed by the function $\text{soft}(\cdot)$ in \eqref{eq:soft}.
Compared to the typical computation, this reduces the BP complexity from $O(2n j_2 (j_2 - 1))$ to $O(2n j_2)$ per-iteration, where $j_2$ is the mean column-weight of $\vH$.

Additionally, we observe that altering the message-update schedule can significantly improve decoding convergence for highly degenerate codes 
(Supplementary Note~8).

\begin{figure}          
\begin{algorithm}[H]    
\begin{flushleft}
\caption{: AMBP$_2$} \label{alg:AMBP2}
	{\bf Input}:
	The same as in Algorithm~\ref{alg:MBP2} but further with \\
	\quad a sequence of real values $\alpha_1 > \alpha_2 >\dots > \alpha_\ell > 0$. \\[2pt]

	{\bf Initialization}: Let $i=1$. 

	{\bf MBP Step}: Run MBP$_2(\vH,\, \vs,\, T_{\max},\, \alpha_i,\, \{\Lambda_j\})$,
	\begin{itemize}
		\item[] which returns an indicator ``CONVERGE'' or ``FAIL'' with an estimated $\hat{\vE} \in \{0,1\}^{2n}$.
	\end{itemize}

	{\bf Adaptive Check}: 
	\begin{itemize}
		\item If the return indicator is ``CONVERGE'', return ``CONVERGE'' (with $\alpha^* = \alpha_i$ and the estimated $\hat{\vE}$);
		\item Otherwise, let $i\leftarrow i+1$; if $i>\ell$, return ``FAIL''; 
		\item Otherwise, repeat from the MBP Step.
	\end{itemize}
\end{flushleft}
\end{algorithm}
~\\[-20pt]
{\footnotesize For erasure decoding, all MBP instances with different $\alpha$ values can run in parallel, halting all instances as soon as one finds a feasible solution.}
\end{figure}

As discussed earlier, we can select the $\alpha$ value from GD, but it is based on the reward function \eqref{eq:soft_obj} and not necessarily optimal. 
First, the function in \eqref{eq:soft_obj} is not concave and has vanishing gradients when encountering stopping sets, so a gradient-based step-size search does not ensure convergence.
Second, directly using a large step-size may lead to oscillations and prevent the algorithm from converging. To mitigate this, we can test a list of decoders with step-sizes ranging from small to large step-sizes ($1/\alpha$).
Thus, we also propose an adaptive MBP$_2$ (AMBP$_2$), as in Algorithm~\ref{alg:AMBP2}, which tests a list of $\alpha_1>\dots>\alpha_\ell$ and adaptively select a proper $\alpha^*$.

\subsection*{MBP$_4$, and unified (A)MBP$_q$ notation, $q=2$ or $4$}

We also consider quaternary BP (BP$_4$). 
This BP leverages the correlation between bits $j$ and $j+n$ via the Pauli check matrix $H \in\{I, X, Y, Z\}^{m \times n}$ for error correction, enhancing accuracy in both classical \cite{DM98b} and quantum settings \cite{KL20}, when compared to binary BP.
Indirectly, this can be interpreted as a turbo update between the two error components $(\vE^X | \vE^Z)$ \cite{MMC98}, reducing the likelihood of the decoder being halted by binary stopping sets in either the $X$ or $Z$ portion.
Preliminary simulations show very good accuracy, so we do not design a GD step for quaternary BP.

For quaternary BP in the quantum setting, we express the variable-to-check $(j\to i)$ message, following \cite{KL20,KL21}, as 
    \begin{align}
    &\quad \ln\medmath{\frac{\Pr(\text{Pauli error $E_j$ and entry $H_{ij}$ commute})}{\Pr(\text{Pauli error $E_j$ and entry $H_{ij}$ anti-commute})}},\\
    &\text{e.g., }
    = \ln\tfrac{p_j^I+p_j^X}{p_j^Y+p_j^Z}
    \text{~~for initial LLR $(j\to i)$ if $H_{ij}=X$.}
    \end{align}
    %
\ed{Extending the derivations in \eqref{eq:llr_tanh_a}--\eqref{eq:a_vs_g}}, we obtain quaternary MBP (MBP$_4$) and its adaptive version (AMBP$_4$) \cite{KL22}.

\begin{figure}
\begin{algorithm}[H]
\begin{flushleft}
\caption{: Erasure decoding with (A)MBP$_q$, $q=2$ or 4} \label{alg:BP_era}
	{\bf Input}:\\ 
	\quad A check matrix $H\in\{I,X,Y,Z\}^{m\times n}$, a~syndrome $\vs\in\{0,1\}^m$,\\ 
	\quad erased coordinates $\br\subseteq\{1,2,\dots,n\}$, 
	an~integer $T_{\max}>0$,\\
	\quad and a set of real $\alpha_1>\alpha_2>\dots>\alpha_\ell>0$. 
    \\[3pt]
    \quad Include $T_\text{GD}$ and $|\Lambda_\text{GD}|$ if the GD step is applied to (A)MBP$_2$.
    \\[3pt]

	{\bf Initialization.} 
	\begin{itemize}
	\item Set $(p^I_j,p^X_j,p^Y_j,p^Z_j) = (\frac{1}{4},\frac{1}{4},\frac{1}{4},\frac{1}{4})$ for $j\in\br$\\  
	and $(p^I_j,p^X_j,p^Y_j,p^Z_j) = (1,0,0,0)$ for $j\in\bar{\br}$.\\
    (Convert to binary distributions \eqref{eq:main_p2_init} if $q=2$.)
	\end{itemize}
		 
	{\bf Decoding.} 
	\begin{itemize}
	\item Run (A)MBP$_q(H,\, \vs,\, T_{\max},\, \{p^I_j,p^X_j,p^Y_j,p^Z_j\}_{j=1}^{n},\, \{\alpha_i\}_{i=1}^\ell)$,\\ 
    with optional GD step with input $(T_\text{GD}, |\Lambda_{GD}|)$ if applied, and obtain an estimate $\hat{E}\in\{I,X,Y,Z\}^n$.
	\end{itemize}

	{\bf Output.} 
	\begin{itemize}
	\item Return $\hat{\vE}\in\{0,1\}^{2n}$ 
        and $\hat \vs = \langle\hat{\vE},\vH\rangle$, where $\hat{\vE}$ and $\vH$ are the binary version of $\hat{E}$ and $H$, respectively.
	\end{itemize}
\end{flushleft}
\end{algorithm}
\end{figure}

Given a Pauli check matrix $H\in\{I,X,Y,Z\}^{m\times n}$, which uniquely determines a binary check matrix $\vH\in\{0,1\}^{m\times 2n}$,
we consider a unified (A)MBP notation: let $\ell = 1$ for MBP and $\ell > 1$ for AMBP, and refer the decoder by
    $$
    \text{(A)MBP}_q(H,\, \vs,\, T_{\max},\, \{p^I_j,p^X_j,p^Y_j,p^Z_j\}_{j=1}^{n},\, \{\alpha_i\}_{i=1}^\ell),
    $$
with $q=4$ for (A)MBP$_4$ and $q=2$ for (A)MBP$_2$.
We state a unified (A)MBP$_q$ algorithm in Algorithm~\ref{alg:BP_era}.

For (A)MBP$_2$, if the optional GD step is applied, we refer the algorithm as GD (A)MBP$_2$ in simulations.

\section*{Discussions}  \label{sec:con}

We analysed the general erasure decoding problem, and proposed efficient decoders suitable for this task. 
Topological codes were known to achieve capacity with vanishing code rate \cite{SBD09,Ohz12,DZ20}. 
We construct various rate bicycle and LP codes, and their MLD thresholds attain the erasure capacity (Fig.~\ref{fig:bnd_logx}).
Soft (A)MBP decoders behave like MLD at low logical error rate, achieving near-capacity performance with a small gap.
The GD flipping decoder performs well on bicycle and LP codes, despite its extremely low complexity.

The Gaussian decoder resembles the core idea of ordered statistics decoding (OSD) \cite{FL95}, solving linear equations restricted to unreliable error estimates. Combining BP with OSD improves BP accuracy for decoding depolarizing errors \cite{PK21,RWBC20}. 
For erasures, a similar \mbox{BP + Gaussian} approach can achieve MLD performance. In this setting, Gaussian elimination targets only erasures left by BP. 
This reduces complexity compared to using the Gaussian decoder alone, though the overall complexity remains $O(n^3)$; 
see Supplementary Note~10 
for details.

Our results extend beyond erasure errors. 
First, we consider deletion errors, which follow the same model as erasure errors but without knowledge of the error locations.
We apply our results on concatenated QLDPC codes with permutation-invariant (PI) codes \cite{Rus00,PoR04,Yin14,ouyang2015permutation,OUYANG201743,Yin21,AAB24} as inner codes. 
Deletion errors can be catastrophic for stabilizer codes, but PI codes can correct them \cite{Yin21,shibayama2021permutation}.
Unlike erasures, a primary challenge of deletion errors is the lack of error location information, making decoding more difficult, as misidentifying even one location can cause failure.
Accurately locating all deletions to obtain erasures aids decoding but is time-consuming, increasing the physical error rate. 
Thus, accepting imperfect erasure conversions, where some leakage errors remain as localized deletions, can be more effective. 
%
We assume a local deletion model where each block of $\ell$ physical qubits has an identifiable boundary, protected by an $[[\ell,1]]$ PI code, converting deletions into erasures.
Figure~\ref{fig:Convert} shows the deletion-to-erasure conversion performance in a channel with deletion probability~$\epsilon$, using a PI code that corrects $t$ deletions (see legend). 
The $[[4,1]]$ PI code \cite{Yin14,nakayama2020first,HagiwaraISIT2020} converts local deletions with probability $\epsilon=0.385$ into erasure errors of rate $p<0.5$, which topological codes can handle. 
%
%
At $\epsilon=0.1$, an $[[\ell,1]]$ code with $\ell<15$ achieves a converted $p<0.1$, allowing a QLDPC code rate above 3/4 (Fig.~\ref{fig:bnd_logx}).
In particular, if the per-qubit deletion probability is 0.01, then using a 7-qubit PI code can bring the logical erasure rate to below $10^{-4}$, which allows QLDPC codes to attain a rate of essentially 1, which is close to the maximum possible rate.

\begin{figure}
\centering\includegraphics[width=0.48\textwidth]{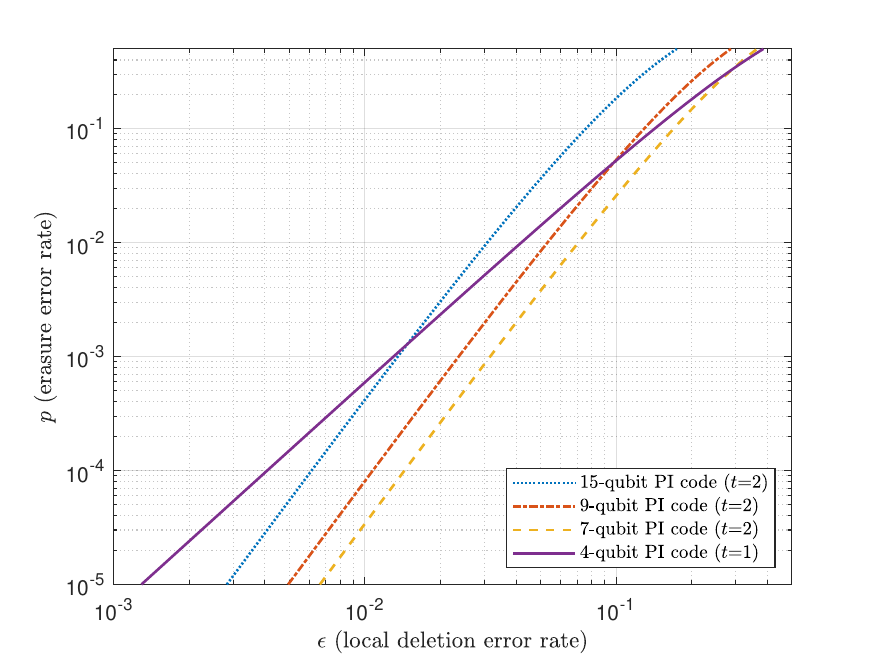}
    \caption{
    Local deletion to erasure rate conversion using $[[\ell,1]]$ PI codes for $\ell = 4,7,9,15$ 
    (\cite{Yin14,HagiwaraISIT2020}, 
    \cite{PoR04}, 
    \cite{Rus00}, 
    \cite{AAB24}).
    The $[[4,1]]$ PI code converts $\epsilon=0.385$ to $p=0.5$, which can then be handled by topological codes.
    At $\epsilon=0.1$, an $[[\ell<15,\,1]]$ PI code achieves $p<0.1$, allowing a QLDPC code rate above 3/4 (Fig.~\ref{fig:bnd_logx}).
    At $\epsilon=0.01$, using a 7-qubit PI code achieves $p<10^{-4}$, which allows QLDPC codes to attain a rate of essentially 1.
    } \label{fig:Convert} 
\end{figure}

Second, in a large system, after converting dominant noise into erasures at known locations, we model additional errors that occur as qubit depolarizing errors. 
This mixed error model is discussed in \cite{DN21} using peeling on 2D codes.
In contrast, (A)MBP can decode general QLDPC codes in this error model. 
We show this using a $[[1054,140]]$ LP code \ed{(from \cite{PK22})} 
with the \ed{simulation setup in METHODS}.
With (A)MBP$_4$, we adapt by adjusting the initial error distribution for the non-erased qubits to
    $(p^I,p^X,p^Y,p^Z) = 
    { (1-p_0,\,\frac{p_0}{3},\,\frac{p_0}{3},\,\frac{p_0}{3}) },$
where $p_0$ can be set to either the depolarizing error rate $p_\text{dep}$ or a suitable fixed value $<p_\text{dep}$ \cite{HFI12,KL22}. 
With (A)MBP$_2$, the initial error distribution is $(p^{(0)},p^{(1)})$, where
$p^{(1)} = p^X+p^Y=p^Z+p^Y$ 
and $p^{(0)}=1-p^{(1)}.$
The adjustment enables soft BP decoders to simultaneously correct both erasure and depolarizing errors.
Figure~\ref{fig:Eras_Dep} shows our decoder accuracy, assuming $p_\text{dep}=0.01$. 
GD Flip-BP$_2$ cannot handle depolarizing errors, so we use MBP$_2$ with $\alpha=1$ as a soft extension. 
All decoders successfully handle mixed errors. 
Among them, AMBP$_4$ shows the best accuracy, as depolarizing errors exhibit strong correlations between $X$ and $Z$ errors.

Our findings highlight the proposed decoders' adaptability and accuracy in handling various error types, including erasures, deletions, and depolarizing errors.
%
%
We expect the versatility and high performance of our proposed decoders to contribute to the development of quantum LDPC codes and fault-tolerant quantum computation.

For practical use, the $\tanh$ rule~\eqref{eq:bsum} can be approximated by the min-sum rule, which selects the minimum magnitude among the input LLRs and combines signs separately \cite{CF02b,CDE+05,VGRV21}. It retains soft information and may outperform flip BP, motivating its study in quantum erasure correction.

\ky{Quantum channel limits have long been known \cite{BDSW96,GotPhD}, \cite{BDS97}. 
For the depolarizing channel, the hashing bound (also known as the quantum Hamming bound) sets the capacity for non-degenerate codes, though degenerate codes may slightly exceed it at low physical error rates \cite{BDSW96,GotPhD}.
However, near-hashing-bound decoding has emerged only recently \cite{KK25}. 
By contrast, although the capacity of the erasure channel is known \cite{BDS97}, comparable decoding results for erasure noise models remain largely unexplored.
Our work closes this gap by providing decoders simultaneously achieving near–erasure-capacity accuracy and linear-time complexity.}

\ky{A recent review \cite{Dem+24} surveys decoding algorithms for surface codes, including erasure decoding, although focusing on the union-find (peeling) decoder.
We recall that peeling decoder + clusters decomposition with  unconstrained size ($O(n^3)$ complexity) achieves MLD on GHP codes \cite{YGP24}. 
More recently, a quantum Maxwell erasure decoder for CSS QLDPC codes \cite{Fre+26} extends peeling with guessing, which interpolates between linear-time decoding (constant guessing budget) and MLD (unconstrained budget). Its asymptotic performance has been analyzed and simulated for bivariate bicycle and quantum Tanner codes.}

\ky{Circuit-level decoding plays a key role in fault-tolerant quantum computation \cite{DKLP02}, with substantial progress focused mainly on depolarizing and gate noise with unknown locations, notably for topological codes \cite{WHP03,RHG07,WFSH10,FWH12} and bivariate bicycle codes \cite{Bra+24}. Implementing circuit-level simulations can be assisted by the stabilizer-circuit tool {\tt Stim} \cite{Gid21s}, following standard practice in circuit-level decoding under unknown error locations \cite{Sko+23,Anqi+24,Hil+25,Mul+25} or erasures \cite{GRK25,Gu+25,Per+25,Bar+26}.}

\ky{Our approach can be generalized similarly using {\tt Stim}. Alternatively, to retain the benefits of non-binary (A)MBP decoding, our approach extends to phenomenological or circuit-level noise by representing the syndrome, data, and two-qubit gate errors with mixed-alphabet variables in BP \cite{KL25,KL24a}. 
While circuit-level errors with unknown locations are more challenging, prior results demonstrate the viability of this mixed-alphabet approach \cite{KL25,KL24a}.} 

\ky{A detailed treatment of circuit-level erasures, which requires additional erasure-conversion circuits \cite{GRK25,Gu+25,Per+25,Bar+26}, lies beyond the scope of this paper.}

\begin{figure}
\centering\includegraphics[width=0.48\textwidth]{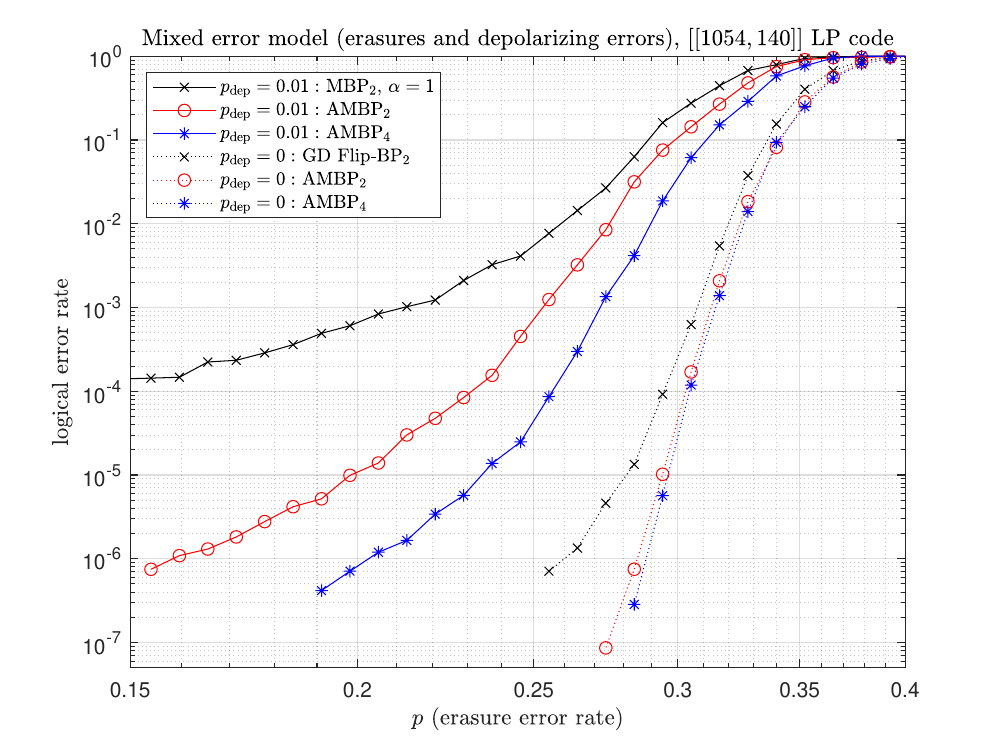}
	\caption{
	BP accuracy for mixed erasure and depolarizing errors using the $[[1054,140]]$ LP code. The curves with only erasure errors are provided for reference ($p_\text{dep}=0$). AMBP$_4$ maintains strong accuracy in the mixed error model ($p_\text{dep}=0.01$).
	} \label{fig:Eras_Dep} 
\end{figure}

\section*{Methods} \label{sec:methods}

\subsection*{BP update rules and the relation with GD} \label{sec:bp2_rules}

For BP$_2$ syndrome decoding in the erasure channel, we have to satisfy two constraints, $\br$ and $\vs$ (i.e., the erasure locations and the syndrome).
The erasure information $\br$ is embedded in the initial distributions \eqref{eq:main_p2_init}. 
The syndrome information $\vs$ is included in BP$_2$ to update the initial LLR
    $\Lambda_j = \ln\frac{\Pr(\vE_j=0)}{\Pr(\vE_j=1)}$ 
to 
    $\Gamma_j = \ln\medmath{\frac
    {\Pr(\vE_j=0~|~\text{syndrome is } \vs)}
    {\Pr(\vE_j=1~|~\text{syndrome is } \vs)}}$
for each binary error $\vE_j$ for $j=1,2,\dots,2n$.
It is well-known that normalized BP$_2$ \cite{CF02a,CDE+05} provides an efficient computation 
    \begin{align}  
    &\textstyle
    \Gamma_j = \Lambda_j + \frac{1}{\alpha} \, \sum\limits_{i\in\sM(j)} (-1)^{\vs_i} 2\tanh^{-1}\left( \prod\limits_{j'\in\sN(i) \setminus \{j\}} \tanh\tfrac{\Gamma_{j'\to i}}{2} \right) \label{eq:llr_tanh_a}
    \\
    &\textstyle
    ~~~ = \Lambda_j + \frac{1}{\alpha} \, \sum\limits_{i\in\sM(j)} \Delta_{i\to j} \label{eq:llr_by_Delta}
    \end{align}
like Algorithm~\ref{alg:MBP2}. 
We justify this computation for the first iteration, with $\alpha=1$, in 
Supplementary Note~5 (part~a).

To determine the value of $\alpha$ in Algorithm~\ref{alg:MBP2}, we relate BP to GD \cite{LBB98,KL22}.
Define the satisfaction at check node~$i$ as 
    $
    (-1)^{\vs_i} 2\tanh^{-1} \left( \prod_{j\in\sN(i)} \tanh\frac{\Gamma_{j}}{2} \right).
    $
Summing this for all check nodes, we get the reward function \eqref{eq:soft_obj}, which is:
    \begin{equation} \label{eq:J_S} 
    \textstyle
    S(\Gamma_1,\dots,\Gamma_{2n}) = \sum\limits_{i=1}^m (-1)^{\vs_i} 2\tanh^{-1} \Bigg( \prod\limits_{j\in\sN(i)} \tanh\frac{\Gamma_{j}}{2} \Bigg).
    \end{equation}
Then 
    \begin{equation} \label{eq:par_der} 
    \frac{\partial S}{\partial\Gamma_j} \textstyle
    = \sum\limits_{i\in\sM(j)} g_{ij} ~ (-1)^{\vs_i}\prod\limits_{j'\in\sN(i) \setminus \{j\}} \tanh\frac{\Gamma_{j'}}{2},
    \end{equation}
where $g_{ij}$ is a positive term, as a function of $(\Gamma_1,\dots,\Gamma_{2n})$, as
    \begin{equation} \label{eq:g_ij}
    g_{ij}(\Gamma_1,\dots,\Gamma_{2n}) 
    = \frac
    { 1-\tanh^2\frac{\Gamma_j}{2} }
    { 1-\left( \prod_{j'\in\sN(i)} \tanh\frac{\Gamma_{j'}}{2} \right)^2 }
    > 0.
    \end{equation}
Notice that $|\Gamma_{j\to i}|<\infty$ due to the use of $\text{soft}(\cdot)$ in \eqref{eq:soft}.
For non-erased part, $\Lambda_j = +\infty$, enforcing $\Gamma_j= \Lambda_j$ in \eqref{eq:llr_tanh_a} and fixing bit $j$ decision as 0.
For erased part, $\Lambda_j=0$ in \eqref{eq:llr_tanh_a}, so the other term determines the decision of bit $j$. 
Comparing this term in \eqref{eq:llr_tanh_a} and \eqref{eq:par_der}, and noting that $\tanh^{-1}(\cdot)$ and $g_{ij}>0$ (as a multiplier) preserve the sign, we conclude that for BP, we should choose an $\alpha$ value such that
    \begin{equation} \label{eq:a_vs_g}
    \frac{1}{\alpha}\, \textstyle 
    2\tanh^{-1}\left( \prod_{j'\in\sN(i)\atop j'\ne j} \tanh\tfrac{\Gamma_{j'\to i}}{2} \right)
    \propto 
    g_{ij} \prod_{j'\in\sN(i) \atop j'\ne j} \tanh\frac{\Gamma_{j'}}{2}
    \end{equation}
to minimize \eqref{eq:J_S}. 
Using \eqref{eq:a_vs_g}, we can express $\alpha$ as a locally linear function of $p$. 
A proper initial $\alpha$ value from $p$ notably reduces AMBP iteration count 
\ed{(see Table~\ref{tb:iter_GHP}).}

We determine $\alpha$ by associating $\Gamma_j$ with the physical error rate $p$. This step is crucial to prevent wrong decisions in the first iteration that would propagate incorrect information in subsequent iterations. 
As all functions in \eqref{eq:a_vs_g} are continuous, $\alpha$ is locally linear in $\Gamma_j$. Associating $\Gamma_j$ with the channel error rate $p$, we express $\alpha$ as a locally linear function of $p$.
We also associate $\alpha$ with the GD parameters in Algorithm~\ref{alg:MBP2} and obtain $T_\text{GD}=5$ and $|\Lambda_\text{GD}|=0.25$ for enough gradients for algorithm updates. 
See Supplementary Note~5 (part~c).



\subsection*{Simulation setup} \label{sec:main_sim}

We conduct computer simulations for HP \cite{TZ09_14} and GHP codes \cite{PK21}, bicycle codes \cite{MMM04}, LP codes \cite{PK22}, and topological toric \cite{Kit97_03,DKLP02,BM07} and XZZX codes \cite{KDP11,THD12,ATBFB21}.
%
Supplementary Note~7 provides more information about the codes.
We test BP and MLD on theses codes.

For our BP decoders, we set $T_{\max}=100$. The average iteration count scales as $O(\log\log n)$ at relevant error rates, as discussed later.
%
For (A)MBP, unless otherwise stated, we use the group random schedule (Supplementary Note~8), 
and select $\alpha^*$ from $1.2,1.19,\dots,0.3$ is using adaptive scheme.

\subsection*{A sphere-packing bound for accuracy evaluation}

We aim for correcting $t$ erasures with $t\propto n$. 
If a qubit is erased, it encounters $I,X,Y, \text{ or } Z$ with equal probability $1/4$. 
Using \eqref{eq:like_dep_ch}, 
a qubit encounters~$I$ with probability $1-p+\frac{p}{4} = 1-\frac{3}{4}p$, and encounters $X,Y, \text{ or } Z$ with equal probability $\frac{p}{4}$. 
Thus, before specifying $\br$, we have 
	\begin{equation} \label{eq:uc_prob}
	(p^I,p^X,p^Y,p^Z) = \textstyle (1-\frac{3p}{4},\, \frac{p}{4},\, \frac{p}{4},\, \frac{p}{4}).
	\end{equation} 
When $t\propto n$ are both large, we expect most occurred errors $E\in\{I,X,Y,Z\}^n$ to have approximately $\frac{3}{4}t$ components as $X,Y,$ or $Z$. 
The probability that the number of such components deviates from $\frac{3}{4}t$ by $\delta t$ for any $\delta>0$ is exponentially suppressed in $t\propto n$.
\begin{definition} \label{def:eBDD}
The logical error rate of \emph{erasure bounded-distance-decoding} (eBDD) with parameters $n,t,$ and $p$ is  
    \begin{equation} \label{eq:eBDD}
    P_\text{eBDD}(n,t,p) := \textstyle \sum\limits_{j=\lceil\frac{3}{4}t\rceil+1}^n {n\choose j} (\tfrac{3}{4}p)^{j}(1-\tfrac{3}{4}p)^{n-j}.
    \end{equation}    
\end{definition}

If a code family has decoding accuracy closely matching $P_\text{eBDD}(n,t,p)$ with a constant $\frac{t}{n}=p_\text{th}$ as $n$ grows, then $p_\text{th}$ serves an estimated threshold for the code family. 
With a fixed $\frac{t}{n}=p_\text{th}$ in \eqref{eq:eBDD}, at any physical error rate $p < p_\text{th}$, the function $P_\text{eBDD}(n,t,p)$ decays exponentially as $n$ increases (see Supplementary Note~6).
The accuracy estimated from eBDD is more reliable than the intersection of performance curves, particularly at low error rates (see Fig.~\ref{fig:decs_on_LP118}).

\subsection*{Performance of capacity-achieving codes} \label{sec:cap_codes}

\begin{figure}
    \hspace*{-6mm}
    \centering\includegraphics[width=0.56\textwidth]{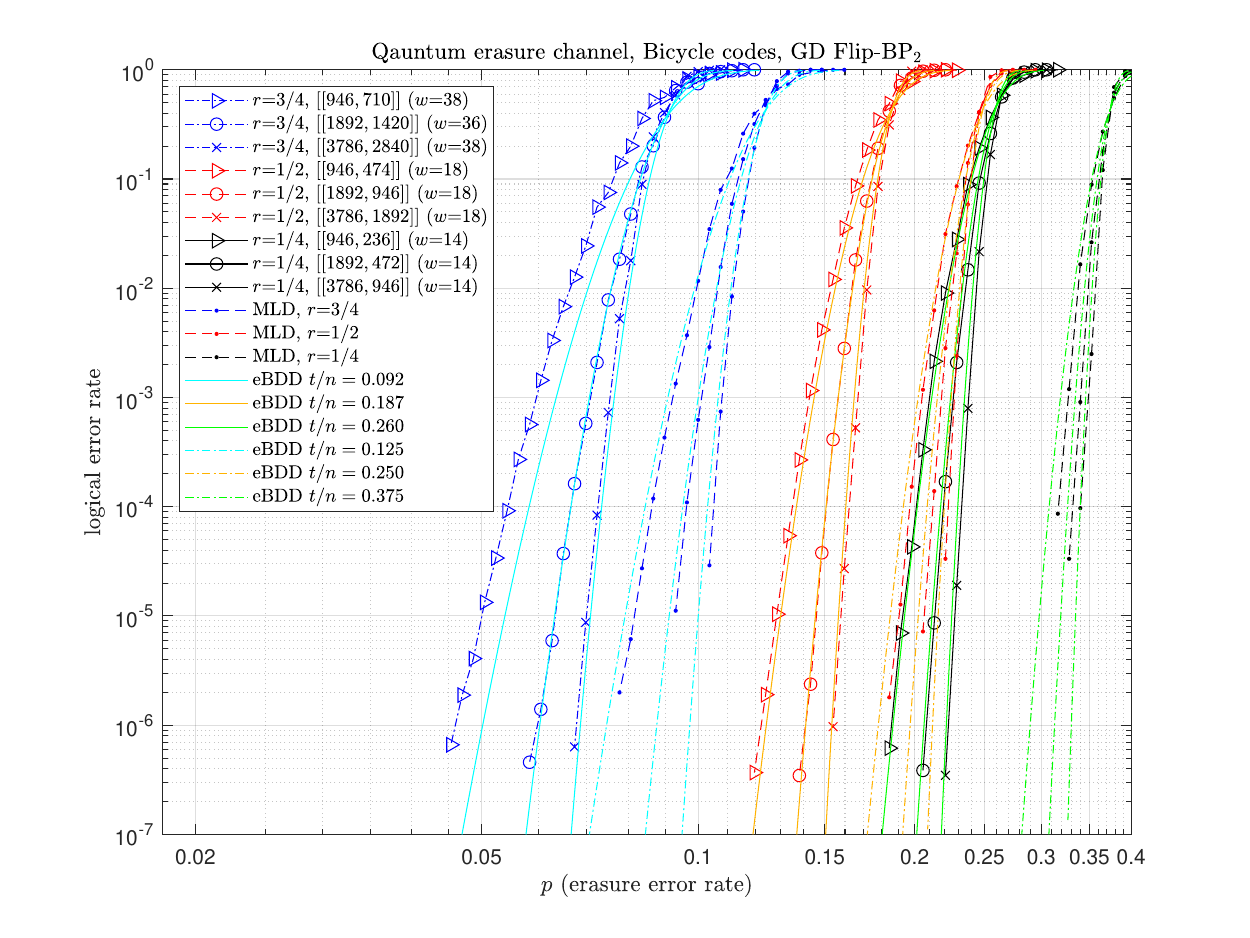}
    \caption{
    Accuracy of GD Flip-BP$_2$ and MLD on bicycle codes. 
	Decoders (A)MBP$_q$ have accuracy comparable to GD Flip-BP$_2$ on these codes. 
    BP threshold for rate 1/4 can be improved from 0.26 to 0.285 by using row-weight $w=12$ 
	\ed{(as shown in Supplementary Note~7 (part~c))}.
	} \label{fig:bic_wt_up}
\end{figure}

We consider moderate-to-high-rate bicycle codes, low-rate LP codes, and vanishing-rate 2D topological codes.
For bicycle and LP code families,
we estimate the threshold values using eBDD, by comparing the decoder accuracy with \eqref{eq:eBDD} given fixed $t/n$.

First, 
we construct bicycle codes with rates 1/4, 1/2, and 3/4. 
    A~key step in bicycle construction is row-deletion, which requires heuristics \cite{MMM04},
    discussed in \ed{Supplementary Note~7 (part~c)}.
GD Flip-BP$_2$ suffices for decoding bicycle codes, with accuracy shown in Fig.~\ref{fig:bic_wt_up}, including MLD for comparison.
Soft decoders (A)MBP have accuracy comparable to GD~Flip-BP$_2$ on these codes.
Using eBDD, we estimate the threshold values, as shown in the legend of Fig.~\ref{fig:bic_wt_up}. 
Bicycle codes exhibit capacity accuracy ($r=1-2p$) with MLD threshold values at $p=0.125, 0.25, 0.375$ for rates $r=3/4, 1/2, 1/4$, respectively. 
The BP threshold follows closely, with a small gap at $p=0.092, 0.187, 0.26$ for these rates.
BP threshold for rate 1/4 can be improved from 0.26 to 0.285 by using row-weight $w=12$, by tolerating some error floor 
(see \ed{Supplementary Note~7 (part~c)}).

\begin{figure}
    \centering
    \includegraphics[width=0.5\textwidth]{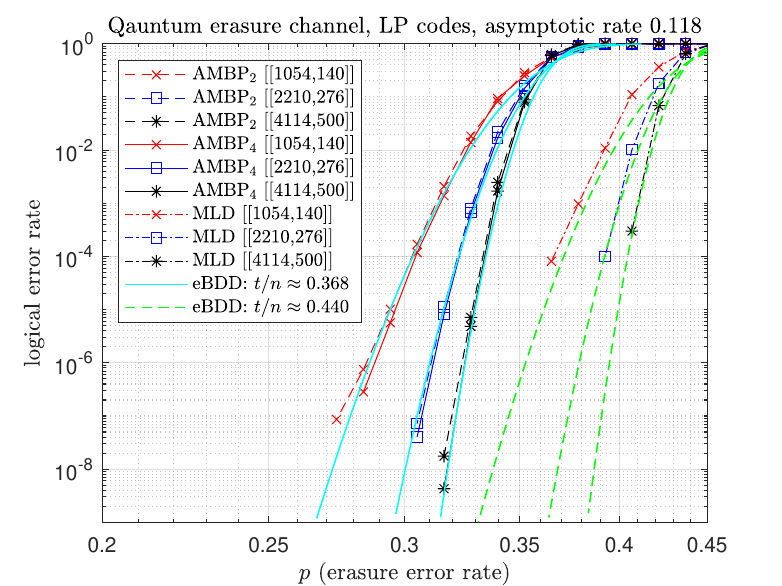}
    \caption{
    Accuracy of AMBP and MLD on rate 0.118 LP codes. 
    Our AMBP$_q$ exhibit a threshold of $p_\text{th} = 0.368$. 
    MLD curves suggest a higher threshold of $p_\text{th} = 0.44$.
	\ed{(Results for rate 0.04 LP codes are in Supplementary Note~7 (part~d).)}
    } \label{fig:LP118}
\end{figure}

\begin{figure}
    \centering \includegraphics[width=0.5\textwidth]{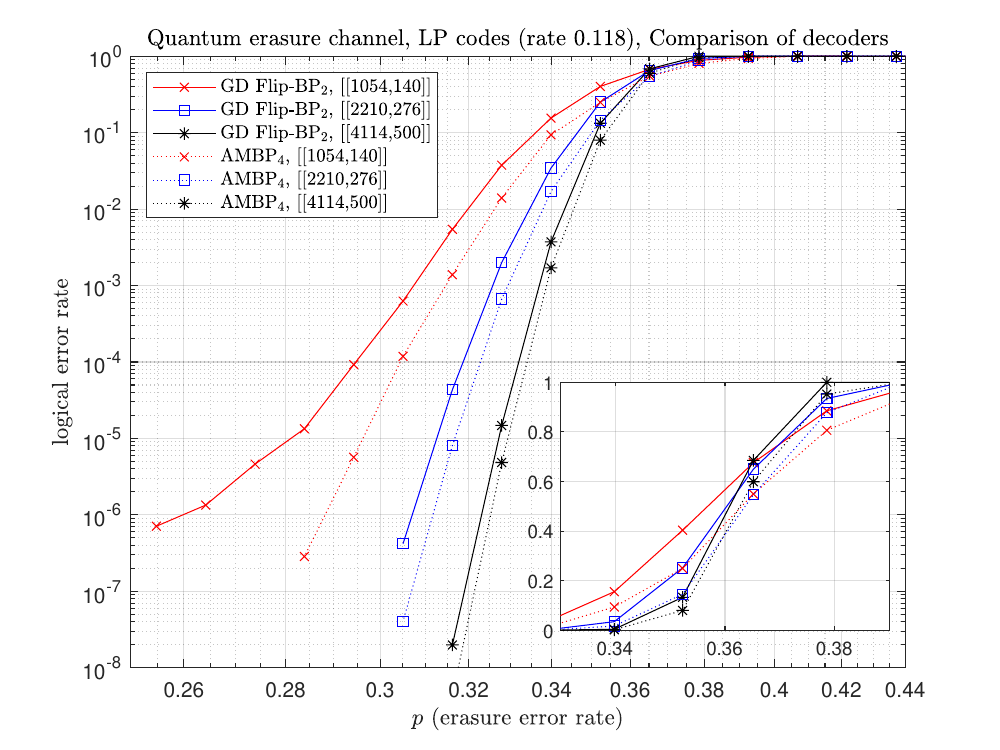}
	\caption{
	GD Flip-BP$_2$ vs AMBP$_4$ on rate 0.118 LP codes.
    AMBP$_4$ has a threshold of 0.368 (Fig.~\ref{fig:LP118}). 
    GD Flip-BP$_2$ closely matches AMBP$_4$ as $n$ increases. Also note that good curve intersection does not guarantee accuracy at low error rates (red line). 
	} \label{fig:decs_on_LP118}
\end{figure}

Second, we construct quantum LP codes \cite[Example~3]{PK22} from classical quasi-cyclic (QC) matrices \cite{Fos04,Tan+04,Mal07}.
We construct LP code families with rates of 0.118 and 0.04. 
For rate 0.118, Figure~\ref{fig:LP118} show the accuracy of AMBP and MLD, along with eBDD curves.
%
LP codes exhibit strong accuracy with MLD, achieving a threshold of $p^\text{(MLD)}_\text{th}=0.44$ at rate $r=0.118$, closely matching the capacity $r = 1 - 2p$.
AMBP decoders also achieve a large threshold of $p^\text{(BP)}_\text{th} = 0.368$. 
For this code family, GD Flip-BP$_2$ performs almost as well as AMBP$_4$ when the code length increases, as shown in Fig.~\ref{fig:decs_on_LP118}. This figure also indicate that good curve intersection does not guarantee accuracy at low error rates; see the red line.

\ed{Estimated thresholds for rate 0.04 LP codes are $p^\text{(MLD)}_\text{th}=0.48$ and $p^\text{(BP)}_\text{th}=0.405$ (Supplementary Note~7 (part~d)).}

GD Flip-BP$_2$ has very low implementation complexity, though its accuracy may depend on the code construction.
Figures~\ref{fig:bic_wt_up} and~\ref{fig:decs_on_LP118} show that it can achieve high accuracy with bicycle and LP constructions.

\begin{figure}
	\centering
	\includegraphics[width=0.5\textwidth]{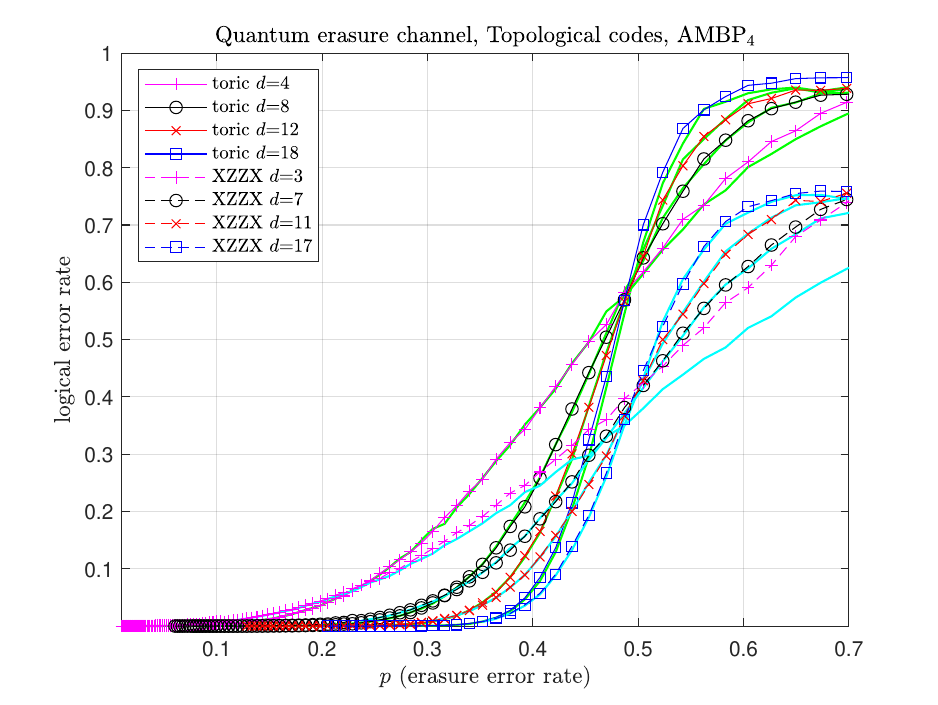}
	\caption{
        Accuracy of AMBP$_4$ and MLD on toric and XZZX codes. Both decoders achieve a threshold of 0.5. 
        MLD curves are shown in green and cyan for the toric and XZZX codes, respectively.
        Both decoders have the same accuracy at low error rates 
		\ed{(as a log-scale figure in Supplementary Note~7 (part~e))}.
	} \label{fig:2D_lin} 
\end{figure}

Third, for 2D topological codes, we consider (rotated) toric codes ${[[L^2,\,2,\,L]]}$ for even $L$ \cite{Kit97_03,BM07} and (twisted) XZZX codes ${[[(d^2+1)/2,\,1,\,d]]}$ for odd $d$ \cite{KDP11} (see a summary of code structures in \cite[Table~I and figures]{KL22isit}). 
For 2D codes, 
the intersection of performance curves, known as the Nishimori point, provides an estimate of the critical error probability (threshold) for MLD \cite{DKLP02}. 
Among our decoders, AMBP$_4$ closely matches MLD accuracy for these codes, as shown in Fig.~\ref{fig:2D_lin}. Both decoders have a threshold of 0.5,
consistent with \cite{SBD09,Ohz12,DZ20}. 
For AMBP, we use a smaller $\alpha_1=0.95$ to increase the step-size, handling mores erasures while maintaining decoder accuracy.
To examine accuracy at low error rates, 
we replot Fig.~\ref{fig:2D_lin} on a log scale, showing that AMBP$_4$ matches MLD 
\ed{on these 2D codes (see Supplementary Note~7 (part~e))}.

\subsection*{Selection of the $\alpha$ value} \label{sec:opt_a}

We demonstrate how to select a proper $\alpha$ value for MBP or an appropriate initial value $\alpha_1$ for AMBP. 
A code with a transition point in the performance curve is convenient for this analysis.
Consider the $[[882,48,16]]$ GHP code from \cite{PK21}. The accuracy of MBP$_2$ with different $\alpha$ values is shown in Fig.~\ref{fig:GHP_a}. 
Three transition points occur around $p=0.32, 0.33,0.34$, for $\alpha=1.2, 1.0, 0.8$, respectively. 
This locally linear trend aligns with our expectation 
\ed{(see Supplementary Note~5).}
We establish a linear relation by $(p,\alpha) = (0.32, 1.2)$ and $(0.34, 0.8 + \text{margin})$, where the margin is 0.1 to prevent $\alpha$ from being too small. The obtained relation is: ${\alpha = -15p + 6}$.
Hence, we propose a function: 
    \begin{equation} \label{eq:a_func_p} 
    \alpha = \text{func}(p) := \max(\min(-15p + 6,\, 1.2),~ 0.3), 
    \end{equation} 
where 1.2 and 0.3 are the upper and lower bounds, respectively. 
Additional simulations show that a wider range of $\alpha$ values does not lead to further improvement. 
With this $\alpha$ selection strategy \eqref{eq:a_func_p}, MBP achieves accuracy as the lower envelope of the curves in Fig.~\ref{fig:GHP_a}.

\begin{figure}
    \centering
    \includegraphics[width=0.5\textwidth]{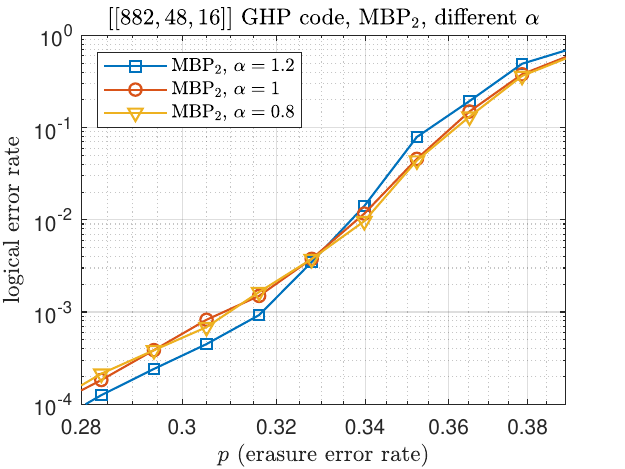}
    \caption{
    Accuracy of MBP$_2$ on the $[[882,48,16]]$ GHP code. Three transition points occur around $p=0.32, 0.33, 0.34$, for $\alpha=1.2, 1.0, 0.8$, respectively, establishing a linear relation.
    } \label{fig:GHP_a}
\end{figure}

\begin{figure}
    \centering \includegraphics[width=0.5\textwidth]{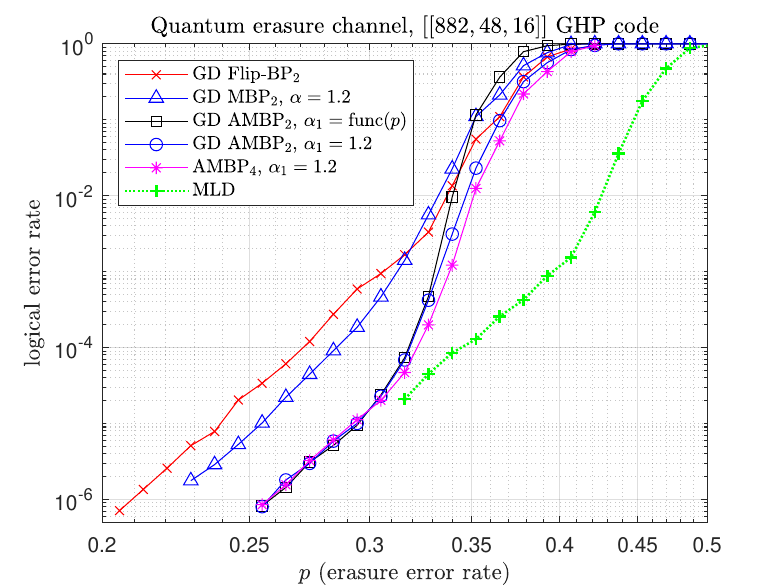}
    \caption{
    Accuracy of several decoders on the $[[882,48,16]]$ GHP code. 
    For BP decoders, AMBP-type decoders achieve the best accuracy, and can maintain a low iteration count when using $\alpha_1 = \text{func}(p)$ as in~\eqref{eq:a1_func}. The runtime is compared in Table~\ref{tb:iter_GHP}. 
	\ed{More HP and GHP results are in Supplementary Note~7 (part~a).}
    } \label{fig:GHP}
\end{figure}

For AMBP-type decoders, we evaluate two strategies:
    \begin{align}
    & \alpha^*\in\{1.2,~ 1.19,~ \dots,~ 0.3\}, \quad \text{ or } \label{eq:a1_fix} \\
    & \alpha^*\in\{\text{func}(p),~ \text{func}(p)-0.01,~ \dots,~ 0.3\}. \label{eq:a1_func}
    \end{align}
We present the decoder accuracy in Fig.~\ref{fig:GHP}. 
As shown, the two strategies exhibit similar accuracy. 
Both strategies show similar average iteration counts when $p\le 0.328$ (logical error rate $<10^{-4}$), as detailed in Table~\ref{tb:iter_GHP}.
A significant difference arises at higher error rates, such as $p=0.392$. 
The strategy in \eqref{eq:a1_func} selects a proper $\alpha_1$ for a given physical error rate $p$, making AMBP as efficient as MBP. 
An illustration of the two strategies is shown in \ed{Supplementary Note~5 (see S~Fig.~5 therein)}.

\medskip

\begin{table}
\caption{
Average BP iteration counts for the $[[882,48,16]]$ GHP code. 
We include more decoders for reference than in Fig.~\ref{fig:GHP}.
All BPs remain efficient at low error rates. 
For AMBP, two strategies \eqref{eq:a1_fix}~(fixed $\alpha_1$) and \eqref{eq:a1_func}~($\alpha_1 = \text{func}(p)$) are compared, where \eqref{eq:a1_func} provides much faster runtime at high error rates.
} \vspace{3pt} 
\centering      
\begin{tabular}{lrrr}
\hline
average iteration counts ~~~ & $p$=0.255 & ~~~~$p$=0.328 & ~~~~$p$=0.392 \\
\hline
AMBP$_4$, $\alpha_1=1.2$ & 3.28 & 10.54 & 4628.89 \\ 
AMBP$_4$, $\alpha_1=\text{func}(p)$  & 3.28 & 8.33 & 90.27 \\ 
MBP$_4$, $\alpha=1.2$ & 3.27 & 6.14 & 80.12 \\
\hline
\hline
GD AMBP$_2$, $\alpha_1=1.2$ & 3.30 & 11.98 & 5330.62 \\ 
GD AMBP$_2$, $\alpha_1=\text{func}(p)$  & 3.30 & 10.67 & 94.75 \\ 
GD MBP$_2$, $\alpha=1.2$ & 3.30 & 6.80 & 83.55 \\ 
\hline
\hline
GD Flip-BP$_2$ & 2.99 & 5.18 & 51.51 \\
\hline
\end{tabular} \label{tb:iter_GHP} 
\end{table}




\section*{Data Availability}
We provide a software implementation package available at ~ \url{https://github.com/kywukuo/qLDPC_dec}

\section*{Acknowledgements}
K.K. and Y.O. acknowledge support from EPSRC (Grant No.~EP/W028115/1). 
Y.O. also acknowledges support from the EPSRC funded QCI3 Hub under Grant No. EP/Z53318X/1.

\section*{Author contributions}
All authors wrote the manuscript. KY Kuo performed the numerics of the manuscript.

\section*{Competing interests}
The authors declare no competing interests.



\renewcommand*{\bibfont}{\footnotesize}

%

\foreach \x in {1,...,15} 
{%
\clearpage
\includepdf[pages={\x}]{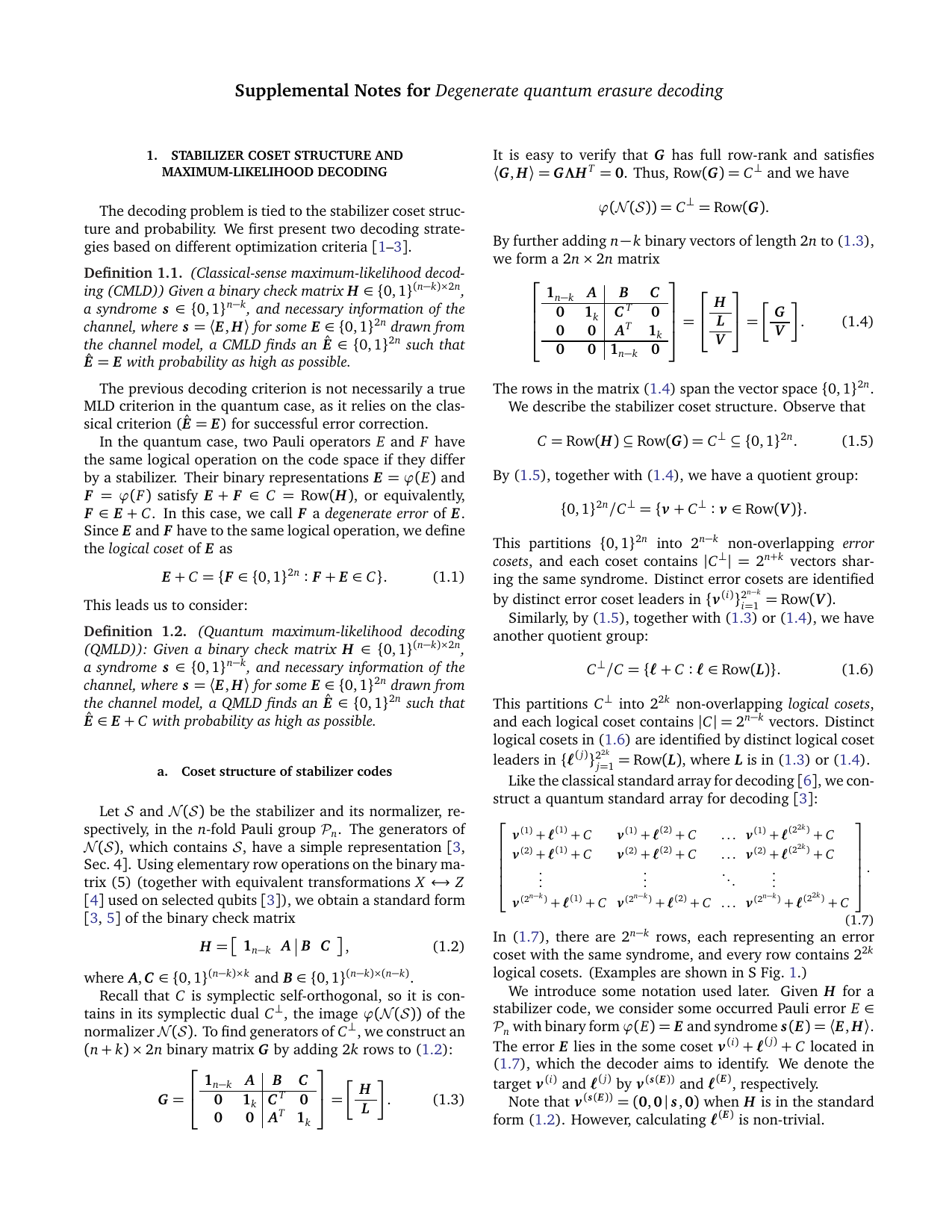} 
}

\end{document}